\newtheorem{theorem}{Theorem}[section]
\newtheorem{lemma}[theorem]{Lemma}
\newtheorem{proposition}[theorem]{Proposition}
\newtheorem{myalg}{Algorithm}
\begin{document}

\begin{frontmatter}

\title{Recursive Analytic Solution of Nonlinear Optimal Regulators
} 

\thanks[footnoteinfo]{
Corresponding author}

\author[a]{Nader Sadegh$^\star$}\ead{sadegh@gatech.edu},  
\author[b]{Hassan Almubarak}\ead{halmubarak@gatech.edu}    

\address[a]{Faculty of The George W. Woodruff School of Mechanical Engineering, Georgia Institute of Technology, Atlanta, GA, USA} 
\address[b]{School of Electrical and Computer Engineering, Georgia Institute of Technology, Atlanta, GA, USA}  

\begin{keyword}                           
Feedback control; Nonlinear control; Optimal control; Non-Quadratic optimal regulators; Quadratic optimal regulators; HJB equation; Infinite horizon; Lyapunov function;    
\end{keyword}                             

\begin{abstract}                          
The paper develops an optimal regulator for a general class of multi-input affine nonlinear systems minimizing a nonlinear cost functional with infinite horizon. The cost functional is general enough to enforce saturation limits on the control input if desired. An efficient algorithm utilizing tensor algebra is employed to compute the tensor coefficients of the Taylor series expansion of the value function (i.e., optimal cost-to-go). The tensor coefficients are found by solving a set of nonlinear matrix equations recursively generalizing the well-known linear quadratic solution. The resulting solution generates the optimal controller as a nonlinear function of the state vector up to a prescribed truncation order. Moreover, a complete convergence of the computed solution together with an estimation of its applicability domain are provided to further guide the user. The algorithm's computational complexity is shown to grow only polynomially with respect to the series order. Finally, several nonlinear examples including some with input saturation are presented to demonstrate the efficacy of the algorithm to generate high order Taylor series solution of the optimal controller.
\end{abstract}

\end{frontmatter}

\section{Introduction} \label{sec_Introduction}
Optimal control has been a leading methodology for both linear and nonlinear systems to address the optimization requirements quantitatively and qualitatively. Optimization problems, and consequently optimal controllers, can be constructed either as finite horizon or infinite horizon problems. Optimal control problems constructed as short horizon problems are usually tackled numerically for which different techniques have been proposed in the literature. A well known short horizon technique is known as Model Predictive Control (MPC) \citep*{mayne2014model,qin2003survey}. Using this method, the optimization problem needs to be solved at each point of time for the short horizon ahead and hence the solution is updated continuously. This method ensures the optimality of the solution. Despite its successful practicality, it can be very expensive computationally. On the other hand, infinite horizon optimal control problems can provide the optimal controller that needs to be updated with the current measurements, i.e. states, only. A leading approach that has untangled the infinite horizon optimal control problems involves solving the Hamilton-Jacobi-Bellman (HJB) equation. The HJB equation is a partial differential equation (PDE) and solving it for nonlinear systems is a daunting task. When the system is linear, however, and a quadratic cost functional is chosen, the HJB equation is reduced to the well-known Algebraic Riccati Equation (ARE) whose solution generates the prominent Linear Quadratic Regulator (LQR). Many different techniques have been proposed over the years to approximate the HJB equation's solution and/or approximate the associated optimal feedback control law for general nonlinear systems \citep*{adurthi2017sparse,al1961optimal,AlmubarakSadeghandTaylor2019,beard1998approximate,fujimoto2011stable,garrard1972suboptimal,garrard1977design,garrard1992nonlinear,kalise2018polynomial,lawton1998numerically,lukes1969optimal,sannomiya1971method,oishi2017numericalstablemani,sakamoto2008analytical,tran2017nonlinear,wernli1975suboptimal,xin2005new,yoshida1989quadratic}.

Recently, a new technique that utilizes the stable manifold theory was developed by \citet{sakamoto2008analytical} to approximate the stabilizing solution of the Lagrangian submanifold of the Hamiltonian system. The resulting algorithm has been successfully applied to control several nonlinear systems of academic interest such as an underactuated acrobot system \citep{horibe2016swing}, a constrained magnetic levitation system \citep*{tran2017nonlinear}, and an inverted pendulum with saturated input \citep{fujimoto2011stable}. The main drawback of this method is that it requires a large amount of a-priori information to produce an approximate suboptimal solution. Moreover, it is computationally complex and suffers from the curse of dimensionality.

Another approach is to successively approximate the solution of the HJB equation through iteratively solving a sequence of the linear generalized HJB (GHJB) equations. GHJB equations are linear PDEs that approach an approximate solution to the HJB equation starting by a randomly chosen feedback admissible control \citep{abu2005nearly,adurthi2017sparse,beard1998approximate,kalise2018polynomial,lawton1998numerically}. The successive approximation will eventually converge to an approximate solution of the HJB equation \citep{beard1998approximate}. A very popular method to successively approximate the solution is through using the Galerkin spectral method which was pioneered by \citet{beard1998approximate}. The main downside of this approach is that it does not lead to a single solution and the quality of the solution depends on the initialization of the control as well as the computation of many integrals.

Other methods try to solve the problem at each state in time. A well known algorithm generated from the HJB equation is known as the State-Dependent Riccati Equation (SDRE). The idea is to factorize the system's dynamics to put it in a form similar to the linear case but with a state dependent system's matrix. This factorization, also called apparent linearization \citep{cimen2008state}, leads to a state-dependent Riccati equation. Nevertheless, because apparent linearization is not unique and different linearizations result in different approximations, bad or nonconvergent solution is not unexpected. Moreover, no current method is known to produce an optimum factorization \citep{cimen2008state}.

The idea of using power series expansion has been investigated mostly in flight control systems papers. In an early work by \cite{al1961optimal}, Al’brekht studied nonlinear optimal control for analytic systems where he provided and proved a sufficient condition for optimiality. Additionally, he constructed a general systematic method to obtain the controller as a power series of the states for a scalar controller. \cite{lukes1969optimal} studied that for finite number of inputs and relaxed the analyticity assumption to twice differentiable. Besides, in a local sense, Lukes provided a proof of existence and uniqueness of the optimal control; nonetheless, neither presented a recursive closed form procedure to obtain the control law. In \cite{garrard1972suboptimal}, Garrard adopted a similar idea by expanding the value function as a power series of an artificial variable around the origin. The proposed method further developed in \cite{garrard1977design} and \citet{garrard1992nonlinear} by expanding the system's dynamics as a power series which is applicable to a wider class of systems. It was tested and compared to the LQR method in automatic flight control systems and proved its superiority, although it is only applicable to systems with low nonlinearites. \cite{sannomiya1971method} proposed a more efficient technique to find the coefficients of the series by assuming that the artificial variable is sufficiently small to find a sub-optimal control in a power series form of the artificial variable. \cite{yoshida1989quadratic} adopted Garrad's problem \citep{garrard1972suboptimal}, i.e. a nonlinear control affine system with a constant input matrix and a quadratic cost functional to develop a systematic method to construct a quadratic regulator for the finite and infinite horizon problems based on Pontryagin's minimum principle and they showed that the solution satisfies the HJB equation. In their development for the infinite horizon regulator, the gradient of the value function along the states was computed and used directly in the controller. This is only true, however, if the Jacobin of the value function gradient is symmetric, which one cannot assume but needs to enforce as we shall show in this paper. Moreover, no closed form of the unknown coefficients matrix was provided. Therefore, it is hard to develop a computer aided design, which they suggested as a future work. Recently, \cite{AlmubarakSadeghandTaylor2019} adopted the same problem and considered enforcing symmetry in computing the series coefficients which led to a closed form solution for the unknown coefficients.

In this paper, we solve the infinite horizon optimal control problem for nonlinear control affine systems. Our algorithm efficiently produces a Taylor series expansion of the optimal solution around an equilibrium point. As a consequence, we provide the optimal controller as a nonlinear function of all possible combinations of the states. Moreover, a recursive closed form formula is provided to obtain the control law as well as the value function. The proposed cost functional is general enough to incorporate not only the usual quadratic functions but also higher order state and input penalty terms. A byproduct of this generality is the ability to utilize input penalty function to enforce input saturation \citep{lyshevski1998optimal}. Finally, we analyze the convergence of the resulting power series and, more importantly, estimate its region of convergence. Subsequently, many limitations of the previous methods are overcome.

This paper is organized as follows. Section \ref{sec_Optimal Control Problem Statement} contains the problem statement as well as the basic theorems for the development of the nonlinear optimal controller through the HJB equation. The main development lies in Section \ref{sec_Infinite Horizon Nonlinear Regulator for Control Affine Systems} where we start by utilizing tensor algebra tools to construct a nonlinear matrix equation, which is an equivalent of the HJB equation, by efficiently representing the value function and the system nonlinearities as a multivariate Taylor series of the state variables. Thereafter, we present an algorithm to untangle this matrix equation recursively up to a prescribed truncation order as well as a general formula for computing the unknown coefficients matrix. In Section \ref{sec_Convergence Analysis and Radius Estimations}, analysis and a numerical estimation of the radius of convergence of the Taylor series are discussed. Afterwards, we present three simulation examples with different systems natures and nonlinearities in Section \ref{sec_Algorithm Implementation Examples and Simulation Studies}. Section \ref{sec_Conclusion} provides further insights, future directions, and other concluding remarks. Proofs of all Lemmas and Propositions are given in Appendices A-D.

\section{Optimal Control Problem Statement} \label{sec_Optimal Control Problem Statement}
Consider the nonlinear control affine dynamical system
\begin{equation} \label{gen.sys}
    \dot{x}= f(x)+ g(x) u
\end{equation}where $x \in \mathbb{R}^n$, $u \in \mathbb{R}^m$, $f: \mathbb{R}^n \rightarrow \mathbb{R}^n$ and  $g: \mathbb{R}^{n \times m} \rightarrow \mathbb{R}^n$. It is worth mentioning that the control affine requirement can be relaxed by introducing new states as shown by an example in \cite{AlmubarakSadeghandTaylor2019}. It is desired to find a control input, $u(t)$, that minimizes the cost functional
\begin{equation} \label{cost.func}
V(x(0),u(t))=\int_{0}^{\infty} Q(x(t))+R(u(t))  dt 
\end{equation}
where $Q: \mathbb{R}^n \rightarrow \mathbb{R}^+$ and $R: \mathbb{R}^m \rightarrow \mathbb{R}^+$ are the state and control penalty functions. For notational convenience, we drop the time argument, $t$, in our development. We shall use the notaion $C^k$ ($C^\omega$) to denote $k$-times continuously differentiable (analytic) functions on a neighborhood of the origin. Throughout the paper, we make the following assumptions on $f$, $g$, $Q$, and $R$:
\begin{description}
\item[A$_{1}$] $f(x)$ and $g(x)$ are at least $C^{2}$, $f(0)=0$ and $(F_1,G_0)$ is stabilizable where $F_1=\frac{\partial f}{\partial x}(0)$ and $G_0=g(0)$. Furthermore, $\|\frac{\partial f}{\partial x}\| \leq C \|x\|^\alpha$ and $\|\frac{\partial g_i}{\partial x}\| \leq C \|x\|^\alpha$, $i=1,\ldots,m$, for some positive $C$ and $\alpha$.
\item[A$_{2}$] $Q(x)$ is $C^\omega$, positive definite (i.e., $Q(x) >0$, $\forall x\neq 0$, and $Q(0)=0$) on $\mathbb{R}^n$, and $Q_1:=\frac{\partial^2 Q(0)}{\partial x^2} \geq 0$.
\item[A$_{3}$] $R(u)$ is $C^\omega$, even ($R(-u)=R(u)$), positive definite, and $R_1:=\frac{\partial^2 R(0)}{\partial u^2}>0$. Furthermore, $\rho(u) := (\frac{\partial R}{\partial u})^T$ has an inverse $\phi(v): = \rho^{-1}(v)$, and $\rho(0)=\phi(0)=0$.
\end{description}
 \textbf{Remark.} Assumption A$_1$ ensures stabilizabity of the linearized system and that $f$ and $g$ are well-posed . This assumption is essentialy the same as that in \cite{lukes1969optimal} for differentiable $f$ and $g$. Assumptions 2 and 3 are automatically satisfied for a quadratic (or any finite order polynomial) $Q(x)$ and $R(u)$. In particular, if $R(u)=\half u^TR_1u$, then $\rho(u)=R_1u$ and $\phi(v)=R_1^{-1}v$.
 
To state the necessary optimality condition, let the {\em Hamiltonian} associated with \eqref{gen.sys} and \eqref{cost.func} be denoted by
\begin{equation} \label{Hamiltonian}
H(x,V_x^*,u)= Q(x)+R(u)+ V^*_x (x) ^T \left( f(x)+g(x)u \right)
\end{equation}
where $V^*(x_0)=\min_{u\in L_2(0,\infty)} V(x_0,u)$ is the value function (i.e., optimal cost-to-go) and the vector $V^*_x=(\frac{\partial V^*}{\partial x})^T$. A necessary condition to minimize \eqref{cost.func} is that the well-known Hamilton-Jacobi-Bellman equation (HJB) be satisfied \citep[p. 278]{lewis2012optimal}: 
\begin{equation} \label{main_HJB}
HJB:= \min_u H(x,V_x^*,u) = 0 
\end{equation}
In the next subsection, we explore the conditions under which the HJB equation has a unique solution and show that its satisfaction is also sufficient for existence of an optimal controller.
\subsection{Existence and Uniqueness of the Optimal Solution and Sufficiency of the HJB Equation}
The objective of the optimal control problem specified by \eqref{gen.sys}-\eqref{cost.func} is to develop a nonlinear feedback control law that minimizes the cost functional \eqref{cost.func}. Such a feedback law will be shown to exists if Assumptions A$_1$-A$_3$ are satisfied. First, the following proposition shows that $\rho$ viewed as a vector field is {\em conservative} with its integral function denoted by $\psi$:
\begin{proposition} \label{psi and phi prop}
Suppose that $R$ satisfies A$_3$. There exists an analytic $(C^{\omega})$ function $\Psi \colon \mathbb{R}^m \rightarrow \mathbb{R}$ such that $(\frac{\partial \Psi}{\partial v})^T = \phi(v)= \rho^{-1}(v)$.
\end{proposition}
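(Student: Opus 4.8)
The plan is to produce $\Psi$ explicitly as the pointwise Legendre--Fenchel transform of $R$, after first upgrading the set-theoretic inverse $\phi$ posited in A$_3$ to a real-analytic map.

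First I would show $\phi\in C^{\omega}$. Since $R\in C^{\omega}$, the map $\rho=(\partial R/\partial u)^{T}$ is analytic with $\partial\rho/\partial u=\partial^{2}R/\partial u^{2}$, which equals $R_{1}>0$ at the origin and is therefore invertible there; since also $\rho(0)=0$, the real-analytic inverse function theorem supplies an analytic local inverse of $\rho$ near $0$, and by uniqueness of inverses it coincides with $\phi$. On the full domain where A$_3$ asserts a global inverse, the same argument applied at each point $v_{0}$ (with $u_{0}=\phi(v_{0})$, at which $\partial^{2}R/\partial u^{2}$ is again invertible because $\phi$ is then a diffeomorphism) gives $\phi\in C^{\omega}$ throughout. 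In passing this yields $D\phi(v)=[\partial^{2}R/\partial u^{2}(\phi(v))]^{-1}$, a symmetric matrix --- the classical closed one-form condition for a conservative field --- although the construction below will not need the Poincar\'e lemma.

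Next I would set $\Psi(v):=v^{T}\phi(v)-R(\phi(v))$; as a product and composition of analytic functions, $\Psi\in C^{\omega}$. Differentiating and inserting the defining identity $\rho(\phi(v))=v$, equivalently $(\partial R/\partial u)(\phi(v))=v^{T}$, gives
\[
\frac{\partial\Psi}{\partial v}(v)=\phi(v)^{T}+\Bigl(v^{T}-\tfrac{\partial R}{\partial u}(\phi(v))\Bigr)D\phi(v)=\phi(v)^{T}+(v^{T}-v^{T})D\phi(v)=\phi(v)^{T},
\]
so $(\partial\Psi/\partial v)^{T}=\phi(v)=\rho^{-1}(v)$, as claimed. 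Equivalently one could take $\Psi(v)=\int_{0}^{1}v^{T}\phi(tv)\,dt$, whose analyticity follows by term-by-term integration of the Taylor series of $\phi$ and whose gradient equals $\phi$ by the symmetry of $D\phi$ noted above; the two primitives agree. The one genuinely delicate step is the first --- promoting $\phi$ from merely invertible to analytic, which rests on $R_{1}>0$ and the analytic inverse function theorem; everything after is the one-line differentiation, and note that convexity of $R$ is never invoked, since $\rho(\phi(v))=v$ alone annihilates the unwanted $D\phi$ term.
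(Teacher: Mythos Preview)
Your proof is correct and follows essentially the same approach as the paper: define $\Psi(v)=v^{T}\phi(v)-R(\phi(v))$, differentiate using $\rho(\phi(v))=v$ to obtain $\partial\Psi/\partial v=\phi(v)^{T}$, and invoke the analytic inverse function theorem for the analyticity of $\phi$. Your treatment is in fact more careful than the paper's, which performs the same one-line differentiation and then simply asserts that $\phi$ is analytic as the inverse of an analytic map with invertible Jacobian.
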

The following theorem, adopted from \citet{lukes1969optimal}, guarantees the existence and uniqueness of the value function $V^*$ and the optimal control. 
\begin{theorem} \label{theorem1}
Consider the optimal control problem \eqref{gen.sys}-\eqref{cost.func} satisfying A$_1$-A$_3$. There exists a unique continuously differentiable ($C^1$) optimal feedback control $u^* (x)$ in a neighborhood of the origin given by
\begin{equation} \label{opt.con1}
u^* (x)=- \phi(g^T V^*_x (x)) 
\end{equation}
where $V^*_x=(\frac{\partial V^*}{\partial x})^T$ satisfies the HJB equation:
\begin{equation} \label{HJB}
V_x^{*T} f(x)  - \Psi(g^T V_x^*(x)) +Q(x) =0
\end{equation} subject to $V^* (0)=0$. Furthermore, $V^*$, and consequently $u^*$, is $C^\omega$ if $f$ and $g$ are $C^\omega$.
\end{theorem}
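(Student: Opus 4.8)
The plan is to follow the route of \citet{lukes1969optimal} for the statement, using the geometric (stable-manifold) reformulation in the spirit of \citet{sakamoto2008analytical}, and to organize the argument in four stages: (i) pointwise minimization of the Hamiltonian \eqref{Hamiltonian}; (ii) reduction of \eqref{main_HJB} to the scalar PDE \eqref{HJB}; (iii) existence and uniqueness of a local $C^1$ solution $V^*$ of \eqref{HJB} with $V^*(0)=0$ via the stable manifold of the associated Hamiltonian characteristic system; and (iv) a verification (sufficiency) argument identifying $V^*$ with the value function and \eqref{opt.con1} with the unique optimal feedback.

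First I would carry out the pointwise minimization over $u$. By A$_3$, $\partial^2 H/\partial u^2 = \partial\rho/\partial u = \partial^2 R/\partial u^2$ is positive definite near the origin and $R$ is strictly convex there, so the first‑order condition $0=\partial H/\partial u = \rho(u)+g(x)^T V_x^*$ has the unique root $u^*=-\rho^{-1}(g^T V_x^*)=-\phi(g^T V_x^*)$, which is the (global) minimizer. Substituting $u^*$ into $\min_u H=0$ and writing $v=g^T V_x^*$, the input‑dependent terms collapse to $R(u^*)+V_x^{*T}g\,u^* = R(\phi(v)) - v^T\phi(v)$ (using that $R$ is even), so \eqref{main_HJB} becomes \eqref{HJB} provided $\Psi(v) = v^T\phi(v) - R(\phi(v))$, i.e. $\Psi$ is the Legendre transform of $R$. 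This identity follows from Proposition \ref{psi and phi prop}: differentiating the right‑hand side gives $\phi(v) + (\partial\phi/\partial v)^T v - (\partial\phi/\partial v)^T\rho(\phi(v)) = \phi(v)$ since $\rho\circ\phi=\mathrm{id}$, matching $(\partial\Psi/\partial v)^T=\phi$, and evaluating at $v=0$ fixes the additive constant since $\phi(0)=\rho(0)=0$ and $R(0)=0$, forcing the normalization $\Psi(0)=0$.

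The heart of the proof is stage (iii), which is also where I expect the main obstacle. The characteristic ODEs of \eqref{HJB} are the Hamiltonian system $\dot x = f(x)-g(x)\phi(g(x)^T p)$, $\dot p = -H_x(x,p,u^*(x,p))$, whose linearization at $(x,p)=(0,0)$ is the Hamiltonian matrix built from $F_1$, $G_0 R_1^{-1}G_0^T$ and $Q_1$. Stabilizability of $(F_1,G_0)$ from A$_1$ — which also furnishes a local stabilizing linear feedback, hence finite cost near the origin — together with positive (semi)definiteness of $Q$ ensures that the Riccati equation $PF_1+F_1^T P - PG_0 R_1^{-1}G_0^T P + Q_1 = 0$ admits a stabilizing solution $P=P^T\ge 0$; consequently the Hamiltonian matrix has no eigenvalues on the imaginary axis and an $n$‑dimensional stable eigenspace. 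By the (analytic) stable manifold theorem the characteristic system has a local stable manifold $W^s$, flow‑invariant, tangent to that eigenspace, and of the same regularity as the vector field ($C^1$ under A$_1$, $C^\omega$ if $f,g\in C^\omega$). Since the flow is Hamiltonian, $W^s$ is Lagrangian, so $p^T dx$ restricted to $W^s$ is closed and, shrinking the neighborhood, exact; integrating it yields $V^*$ with $V^*(0)=0$ whose gradient graph is $W^s$, i.e. $p=V_x^*(x)$, with $\partial V_x^*/\partial x$ symmetric and $V_x^*$ of the stated regularity, and $V^*$ solves \eqref{HJB} by construction. For uniqueness, any $C^1$ solution of \eqref{HJB} with $V^*(0)=0$ has $V_x^*(0)=0$ and its gradient graph is a flow‑invariant set bounded near the origin, hence contained in $W^s$, so it coincides with the one just built. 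The delicate point is precisely pinning the PDE solution to the decaying (stabilizing) branch of the characteristics and establishing the Lagrangian/exactness property so that $V^*$ is well defined from $V_x^*$ — this is the symmetry issue flagged in Section \ref{sec_Introduction}.

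Finally, for stage (iv) I would run the standard verification computation: for any admissible $u$ with trajectory $x(\cdot)$, \eqref{main_HJB} gives $H(x,V_x^*,u)\ge 0$, hence $\frac{d}{dt}V^*(x(t)) = V_x^{*T}(f+gu) \ge -Q(x)-R(u)$ with equality iff $u=u^*(x)$; integrating and using that admissible trajectories decay so that $V^*(x(t))\to V^*(0)=0$ gives $V(x(0),u)\ge V^*(x(0))$, with equality for the closed loop $u=u^*$. That closed loop is locally asymptotically stable because $\dot V^* = -Q(x)-R(u^*(x))<0$ for $x\ne 0$ near the origin ($Q$ positive definite by A$_2$), so $V^*$ is indeed the value function and, by strict convexity of $R$, $u^*$ is the unique optimal feedback. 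Analyticity of $u^*$ when $f,g\in C^\omega$ then follows from analyticity of $V_x^*$ (analytic stable manifold) and of $\phi$ (Proposition \ref{psi and phi prop}).
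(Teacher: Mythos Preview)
Your proposal is essentially correct, but it takes a substantially more detailed route than the paper. The paper's own proof is very brief: it defers the existence and uniqueness of the $C^1$ optimal feedback (and of the $C^2$ value function), as well as the $C^\omega$ upgrade when $f,g\in C^\omega$, entirely to \citet{lukes1969optimal}, and only supplies the algebraic derivation of \eqref{opt.con1} and \eqref{HJB}---namely your stages (i) and (ii), using Proposition~\ref{psi and phi prop} exactly as you do to identify $\Psi(v)=v^T\phi(v)-R(\phi(v))$. Your stages (iii)--(iv) therefore go well beyond what the paper proves here: the stable-manifold construction in the spirit of \citet{sakamoto2008analytical} is a genuinely different (and self-contained) way to obtain the local $C^1$/$C^\omega$ solution of \eqref{HJB}, and your verification argument in stage (iv) is essentially what the paper establishes separately as Theorem~\ref{theorem2}. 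What your approach buys is a constructive picture (the graph of $V_x^*$ as the Lagrangian stable manifold) and the symmetry of $\partial V_x^*/\partial x$ for free; what the paper's approach buys is brevity by outsourcing to Lukes.

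One minor caveat in your stage (iii): from A$_1$--A$_2$ you only have stabilizability of $(F_1,G_0)$ and $Q_1\geq 0$, which guarantees a maximal nonnegative ARE solution but not, without detectability of $(Q_1^{1/2},F_1)$, that $F_1-G_0R_1^{-1}G_0^TP$ is Hurwitz; hence hyperbolicity of the Hamiltonian matrix is not immediate from the stated hypotheses alone. The paper sidesteps this by citing Lukes (and elsewhere implicitly treats $P_1>0$ as given), so you may wish either to add a detectability remark or to invoke Lukes at that one step.
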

\begin{proof} \label{existance theorem}
The proof of existence and uniqueness of the $C^1$ optimal controller and the associated $C^2$ value function $V^*(x)$ can be found in \citet{lukes1969optimal}. According to the HJB equation \eqref{main_HJB}, it is necessary for the optimal controller $u^*(x)$ to minimize the Hamiltonian $H(x,V_x^*,u)$ with respect to $u$. This implies $H_u(x,V_x^*,u)=\rho(u)+g^TV^*_x=0$. Since $\phi$ is the inverse of $\rho$, \eqref{opt.con1} follows. Letting $v^*=g^T V^*_x$, the corresponding HJB equation becomes
$$
H(x,V_x^*,u^*)= V^*_x (x)^Tf(x)+R(u^*)-v^{*T}\phi(v^*)+Q(x) 
$$
By Proposition \ref{psi and phi prop}, $v^{*T}\phi(v^*)-R(u^*)=\Psi(v^*)$ since $u^*=\phi(v^*)$. Thus \eqref{HJB} follows. The proof of the last part of the Theorem can be found in \citet{lukes1969optimal}.
\end{proof}
The preceding Theorem guarantees the existence and uniqueness of the optimal solution as well as the necessity of the HJB equation. The next Theorem establishes the sufficiency of the HJB equation \eqref{HJB} and is the infinite horizon version of the sufficiency results in \citet[p. 165-167]{liberzon2011calculus}.
\begin{theorem} \label{theorem2}
The optimal control problem \eqref{gen.sys}-\eqref{cost.func} has a unique minimizing solution $u^*$ given by \eqref{opt.con1} if the HJB equation \eqref{HJB} has a unique $C^2$ solution for $V^*$.
\end{theorem}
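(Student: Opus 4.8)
The plan is to run the classical verification/sufficiency argument, adapted to the infinite horizon (the finite-horizon template being \citet[p.~165--167]{liberzon2011calculus}). Let $W\in C^2$ be the assumed unique solution of \eqref{HJB} subject to $W(0)=0$, and define the candidate feedback $\mu(x):=-\phi(g^T W_x(x))$. Exactly as in the proof of Theorem~\ref{theorem1}, the stationarity condition $H_u(x,W_x,u)=\rho(u)+g^TW_x=0$ has $u=\mu(x)$ as its unique solution, which under A$_3$ is the unique minimizer of $u\mapsto H(x,W_x,u)$; and combining \eqref{HJB} with Proposition~\ref{psi and phi prop} as in that proof yields $H(x,W_x,\mu(x))=0$. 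Thus \eqref{HJB} states precisely that $\min_u H(x,W_x,u)=0$, attained only at $u=\mu(x)$. I would also record at the outset that $W$ is positive definite near the origin: Theorem~\ref{theorem1} (via \citet{lukes1969optimal}) supplies a $C^2$ solution of \eqref{HJB} with $W(0)=0$ equal to the cost-to-go of the positive-definite running cost, and uniqueness of the $C^2$ solution identifies $W$ with it.

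Next I would establish the fundamental identity. For an admissible $u\in L_2(0,\infty)$ with trajectory $x(\cdot)$ from $x_0$, integrate $\tfrac{d}{dt}W(x(t))=W_x(x(t))^T\big(f(x(t))+g(x(t))u(t)\big)$ and add it to the running cost to obtain
\begin{equation} \label{verif.identity}
V(x_0,u)=W(x_0)-\lim_{T\to\infty}W(x(T))+\int_0^\infty H\big(x(t),W_x(x(t)),u(t)\big)\,dt .
\end{equation}
Since $H(x,W_x,u)\ge \min_u H(x,W_x,u)=0$ by \eqref{HJB}, it follows that $V(x_0,u)\ge W(x_0)$ whenever the terminal term $W(x(T))\to 0$, with equality if and only if $H\big(x(t),W_x(x(t)),u(t)\big)=0$ for a.e.\ $t$, i.e.\ (by the strict-minimizer property) $u(t)=\mu(x(t))$ a.e.

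It then remains to show the lower bound is attained. Applying $\mu$ to \eqref{gen.sys} and using $H(x,W_x,\mu)=0$ gives $\dot W=W_x^T(f+g\mu)=-Q(x)-R(\mu(x))\le -Q(x)<0$ for $x\neq 0$, so $W$ is a strict Lyapunov function and the origin of the closed loop is asymptotically stable. Hence trajectories starting near the origin satisfy $x(t)\to 0$ and $W(x(t))\to 0$, so \eqref{verif.identity} applies to the closed loop and yields $V\big(x_0,\mu(x(\cdot))\big)=W(x_0)$. Therefore $\mu$ is optimal, $W$ coincides with the value function $V^*$, and \eqref{opt.con1} is recovered; uniqueness of the minimizing control follows from the equality case of the previous paragraph.

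The step I expect to be the main obstacle is controlling the terminal term $\lim_{T\to\infty}W(x(T))$ in \eqref{verif.identity} for an \emph{arbitrary} competing admissible control: finiteness of $V(x_0,u)$ together with positive definiteness of $Q$ gives only $\int_0^\infty Q(x(t))\,dt<\infty$, which must be upgraded to $x(T)\to 0$ — e.g.\ by a Barbalat/uniform-continuity argument on $t\mapsto Q(x(t))$, or by restricting, as in Theorem~\ref{theorem1}, to the neighborhood of the origin on which the problem is posed and on which $W$ is proper. A secondary technical point is to confirm that $\mu(x)$ is a \emph{global} (not merely critical) minimizer of $H(x,W_x,\cdot)$ for each $x$, which is exactly where the structural hypotheses on $R$ in A$_3$ (invertibility of $\rho$, positive definiteness, evenness) are used; everything else reduces to the algebraic identities already invoked in the proof of Theorem~\ref{theorem1} and to Assumptions A$_2$--A$_3$.
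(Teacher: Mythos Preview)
Your approach is essentially the paper's: first show $V^*$ is a strict Lyapunov function so the closed loop is asymptotically stable, then invoke the Liberzon verification argument for optimality. The paper carries out the Lyapunov step explicitly (your paragraph on $\dot W=-Q-R(\mu)<0$) and then simply writes ``the proof of $u^*$ in \eqref{opt.con1} minimizing \eqref{cost.func} is completely parallel to that in \citet[p.~165--167]{liberzon2011calculus} and is omitted''; you have spelled out that parallel via the fundamental identity \eqref{verif.identity} and correctly flagged the terminal-term issue that the paper leaves to the reference.
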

\begin{proof}
We first need to establish that $u^*$ given by \eqref{opt.con1} is a stabilizing controller or equivalently the origin of the closed loop system $\dot{x}=f(x)+g(x)u^*(x)$ is asymptotically stable. The HJB equation \eqref{main_HJB} with the optimal control $u^*(x)$ can be rewritten as  
$$
 V^*_x (x) ^T (f(x)+ g(x) u^*(x)) + Q(x)+R(u^*(x))=0
$$
Now, suppose that the value function, $V^*(x)$, satisfies the HJB equation.  Then,
\begin{gather*}
\begin{split}
\frac{dV^*}{dt} & = V^*_x (x) ^T (f(x)+ g(x) u^*(x)) \\ & = -  \big(Q(x)+R(u^*(x)) \big) \leq - Q(x) < 0, \; \forall x\neq 0
\end{split}
\end{gather*}
By Lyapunov stability theory \citep[Theorem 4.1]{khalil2002nonlinear}, the origin of the closed loop system, $f(x)+g(x)u^*(x)$, is asymptotically stable.
The proof of $u^*$ in \eqref{opt.con1} minimizing \eqref{cost.func} is completely parallel to that in \citet[p. 165-167]{liberzon2011calculus} and is omitted. 
\end{proof}
Therefore, satisfaction of the HJB equation is both necessary and sufficient for existence of the optimal control solution to \eqref{gen.sys} and \eqref{cost.func}. Moreover, in view of Theorem \ref{theorem1}, a unique optimal solution $u^*$ is attainable if Assumptions A$_1$-A$_3$ are satisfied. It is also important to note that the value function $V^*$ is real analytic (i.e., has a convergent Taylor series) if $f$ and $g$ are real analytic. 

In the next section, we will formulate an efficient method for computing the Taylor series of $V^*$ to within a prescribed order. Additionally, a closed form solution to find the coefficients of an arbitrary order will be provided. 



\section{Infinite Horizon Nonlinear Regulator for Control Affine Systems} \label{sec_Infinite Horizon Nonlinear Regulator for Control Affine Systems}
The key idea is to use tensor algebra tools  to compute the Taylor series of the value function as
\begin{equation} \label{opt.val.tensors}
V^*(x)=\sum_{k \geq 1} \frac{x^T \bar{P_k} x^{\otimes k}}{k+1}
\end{equation}
where $\bar{P_k} \in {\mathbb{R}}^{n\times n^k}$ is a matricized symmetric tensor of rank \(k\), \(\otimes\) is the Kronecker product and \(x^{\otimes k}=x\otimes x\ldots  \otimes x\) \(k\) times. A tensor \(\bar{P_k}\) is represented by a matrix by unfolding it. The following subsection highlights the key definitions and properties of multivariate tensors needed for further refinement of the value function $V^*(x)$ and the resulting optimal control $u^*(x)$.   
\subsection{Series Formulation of the Optimal Solution via Multivariate Tensors}
We begin by reviewing the key properties of Kronecker product \citep[Chapter 4]{horn1994topics}, which is frequently used throughout the remainder of the paper particularly in our main numerical algorithm. For any matrices $A$,  $B$, and $C$, and integers $m$ and $n$,
\begin{enumerate} [1)]
\item $ (A \otimes B)^T = A^T \otimes B^T $. 
\item $ (A \otimes B)^{-1} = A^{-1} \otimes B^{-1} $, for non-singular $A$ and $B$
\item $ \text{vec} (ABC) = (C^T \otimes A) \text{vec}(B)$, where vec is an operator such that vectorizing $A_{m\times n}=\begin{bmatrix} A_1 && A_2 && \dots \end{bmatrix}$ yields $\text{vec} (A_{m \times n}) = \begin{bmatrix} A_1^T && A_2^T && \dots  \end{bmatrix}_{mn \times 1}^T$, where $A_1,A_2,\dots$ are vectors of $A$. \end{enumerate}
It is crucial to mention that repeated terms created from the Kronecker product of the states vector generates dependent vectors in the tensor $\bar{P_k}$. To eliminate redundancy, the vector $x^{\otimes k} \in \mathbb{R}^{n^k}$ can be reduced to lexographic listing $x^k$ of $m_k:=\binom{n+k-1}{k}$ linearly independent terms of $x^{\otimes k}$ \citep{loparo1978estimating}: 
\begin{equation} \label{xk}
x^k=\begin{bmatrix}x_1^k \\ \sqrt{k}x_1^{k-1}x_2 \\ \vdots \\ c_{k_1,k_2,\dots,k_n} x_1^{k_1} x_2^{k_2} \dots x_n^{k_n} \\ \vdots \\ \sqrt{k} x_n^{k-1} x_{n-1} \\ x_n^k \end{bmatrix}    
\end{equation}
where $c_{k_1,k_2,\dots,k_n}^2=\frac{k!}{k_1!k_2!\ldots k_n!}$ are the multinomial coefficients representing the number of times $x_1^{k_1} x_2^{k_2} \dots x_n^{k_n}$ is repeated in $x^{\otimes k}$. Therefore, for each $k$, there exists a unique $L_k \in \mathbb{R}^{n^k \times m_k}$ such that $x^{\otimes k}=L_k x^k$ and $\bar{P}_kx^{\otimes k}=P_kx^k$ with $P_k= \bar{P_k} L_k$ to be the reduced matricized tensor. It follows that $L_k$ is orthonormal (i.e. $L_k^T L_k=I$) with entries that are either zero or $1/c_{k_1,k_2,\dots,k_n}$. For example, for $n = 2$,
$$
x\otimes x=x^{\otimes 2}= \begin{bmatrix} x_1^2 \\ x_1 x_2 \\ x_2 x_1 \\ x_2^2 \end{bmatrix}, \;
x^2= \begin{bmatrix} x_1^2\\ \sqrt{2}x_1 x_2\\x_2^2 \end{bmatrix}, \;
L_2=\begin{bmatrix} 1&0&0\\0&\frac{1}{\sqrt{2}}&0\\0&\frac{1}{\sqrt{2}}&0\\0&0&1 \end{bmatrix}
$$
The following Proposition shows that the higher dimensional tensors inherit scalar product and norm from their vector parents.
 \begin{proposition}\label{vectors power prop}
 For $x,y \in \mathbb{R}^n$, an integer $k \in \mathbb{N}$, and the lexographic listing vectors $x^k$ and $y^k$ defined in \eqref{xk}, $\langle x^k, y^k \rangle = (x^T y)^k$. In particular, $||x^k||=||x||^k$.
\end{proposition}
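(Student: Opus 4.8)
The plan is to lift the identity to the full Kronecker powers $x^{\otimes k}$ and $y^{\otimes k}$, where the inner product is trivial to evaluate, and then transfer the result back to the reduced vectors $x^k$ and $y^k$ using the orthonormality of the reduction matrix $L_k$ introduced just before \eqref{xk}.

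First I would invoke the two facts already established in the text: $x^{\otimes k}=L_k x^k$ for every $x\in\mathbb{R}^n$, and $L_k$ is orthonormal, i.e. $L_k^T L_k=I$. Orthonormality means that $L_k$ preserves inner products, so
\[
\langle x^k,y^k\rangle=(x^k)^T L_k^T L_k\, y^k=(L_k x^k)^T(L_k y^k)=\langle x^{\otimes k},y^{\otimes k}\rangle .
\]
Hence it suffices to show $\langle x^{\otimes k},y^{\otimes k}\rangle=(x^T y)^k$.

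Next I would compute $\langle x^{\otimes k},y^{\otimes k}\rangle=(x^{\otimes k})^T y^{\otimes k}$ using two elementary Kronecker identities: the transpose rule $(A\otimes B)^T=A^T\otimes B^T$ applied $k-1$ times gives $(x^{\otimes k})^T=(x^T)^{\otimes k}$, and the mixed-product rule $(A\otimes B)(C\otimes D)=(AC)\otimes(BD)$ \citep[Ch.~4]{horn1994topics}, again applied $k-1$ times, gives
\[
(x^T)^{\otimes k}\,y^{\otimes k}=(x^T y)\otimes(x^T y)\otimes\cdots\otimes(x^T y)=(x^T y)^k ,
\]
the last equality holding because $x^T y$ is a scalar and the Kronecker product of scalars is their ordinary product. (This last chain can equally well be checked by a one-line induction on $k$.) Specializing to $y=x$ then gives $\|x^k\|^2=\langle x^k,x^k\rangle=(x^T x)^k=\|x\|^{2k}$, so $\|x^k\|=\|x\|^k$.

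There is no genuine difficulty in this argument; the only point that requires attention is the very first reduction, namely that the scaling constants $c_{k_1,\dots,k_n}$ chosen in \eqref{xk} are exactly those for which $L_k^T L_k=I$. This is precisely why the multinomial coefficients enter as $c_{k_1,\dots,k_n}^2=k!/(k_1!\cdots k_n!)$: in the column of $L_k$ indexed by the monomial $x_1^{k_1}\cdots x_n^{k_n}$ there are $c_{k_1,\dots,k_n}^2$ nonzero entries, each equal to $1/c_{k_1,\dots,k_n}$, so that column has unit norm, while columns attached to distinct monomials are supported on disjoint sets of coordinates of $x^{\otimes k}$ and are therefore orthogonal. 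Once this normalization is confirmed, the proposition follows immediately from the two displayed identities.
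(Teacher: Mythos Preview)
Your proof is correct and follows essentially the same route as the paper's own argument: insert $L_k^T L_k=I$ between $(x^k)^T$ and $y^k$, identify $L_k x^k=x^{\otimes k}$, and then evaluate $(x^{\otimes k})^T y^{\otimes k}=(x^T y)^k$ via the Kronecker transpose and mixed-product rules. You simply spell out the last step and the orthonormality of $L_k$ in more detail than the paper does.
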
 
Now, using the mapping $P_k=\bar{P}_k L_k$ in \eqref{opt.val.tensors} provides the value function in terms of the reduced matrices $P_k$'s:
\begin{equation} \label{opt.val}
V^*(x)=\sum_{k\geq1} \frac{x^T P_k x^k}{k+1}
\end{equation}
The gradient of the $k$-th term of $V$, i.e., $V_k(x)=\frac{1}{k+1}x^T P_k x^k$, with a symmetric Jacobian (e.g., $P_1=P_1^T$ for $k$=1) is given by
\begin{equation} \label{opt.grad_k}
\left(\frac{\partial V_k}{\partial x}\right)^T= P_k x^k
\end{equation}
The following proposition provides the necessary and sufficient symmetry conditions on $P_k$ in order for \eqref{opt.grad_k} to be valid.
\begin{proposition}\label{prop:sym}
The given relationship in \eqref{opt.grad_k} holds for an arbitrary tensor $P_{k}\in \mathbb{R}^{n\times m_{k}}$ if and only if one of the following equivalent conditions holds
\begin{enumerate}[i)]
\item The Jacobian matrix of $P_k x^k$ (i.e., the Hessian of $V_k$) is symmetric.
\item ${\textnormal{vec}} (P_k) = K_{k}^T p_k$ for some $p_k \in \mathbb{R}^{m_{k+1} }$ where $K_k \in \mathbb{R}^{n m_{k}\times m_{k+1} }$ is the unique matrix that reduces $x^k\otimes x$ to $x^{k+1}$: $x^k\otimes x=K_k^T x^{k+1}$.
\end{enumerate}
\end{proposition}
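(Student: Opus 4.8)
Throughout write $V_k(x)=\tfrac{1}{k+1}x^TP_kx^k$ and let $J(x)\in\mathbb{R}^{n\times n}$ be the Jacobian of $x\mapsto P_kx^k$ (so $J=P_k\,\partial x^k/\partial x$), whose entries are homogeneous of degree $k-1$. First I would establish the equivalence of \eqref{opt.grad_k} and (i) by direct differentiation. The product rule gives $\big(\partial V_k/\partial x\big)^T=\tfrac{1}{k+1}\big(P_kx^k+J(x)^Tx\big)$, and since each component of $P_kx^k$ is homogeneous of degree $k$, Euler's identity yields $J(x)\,x=k\,P_kx^k$. Hence, if $J(x)$ is symmetric, then $J(x)^Tx=J(x)\,x=k\,P_kx^k$, so $\big(\partial V_k/\partial x\big)^T=P_kx^k$, which is \eqref{opt.grad_k}; conversely, if \eqref{opt.grad_k} holds, then $J$ equals the Hessian of the polynomial $V_k$ and is therefore symmetric. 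This proves \eqref{opt.grad_k} $\Leftrightarrow$ (i).

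For (i) $\Leftrightarrow$ (ii) the plan is to pass to the lexicographic representation of the degree-$(k+1)$ homogeneous polynomial $V_k$. Using $\mathrm{vec}(ABC)=(C^T\otimes A)\mathrm{vec}(B)$ one gets $x^TP_kx^k=(x^k\otimes x)^T\mathrm{vec}(P_k)$, and then $x^k\otimes x=K_k^Tx^{k+1}$ introduces $K_k$. The heart of the argument is to show that $K_k^T$ is precisely the matrix that implements ``differentiate and re-express in the $x^k$ basis'' on degree-$(k+1)$ homogeneous polynomials written in the $x^{k+1}$ basis: concretely, that for every $p$ the matrix $P$ determined by $\mathrm{vec}(P)=K_k^Tp$ satisfies both $\tfrac{1}{k+1}x^TPx^k=\tfrac{1}{k+1}(x^{k+1})^Tp$ and $\big(\partial[\tfrac{1}{k+1}(x^{k+1})^Tp]/\partial x\big)^T=Px^k$. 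I would verify this by reading the entries of $K_k^T$ off its defining identity $x^k\otimes x=K_k^Tx^{k+1}$ --- each row carries a single nonzero entry, a ratio $c_{\mathbf{r}}/c_{\mathbf{r}+\mathbf{e}_l}$ of the multinomial constants of \eqref{xk}, with $\mathbf{e}_l$ the $l$-th unit multi-index --- and then using the elementary identity $m_l\binom{k+1}{\mathbf{m}}=(k+1)\binom{k}{\mathbf{m}-\mathbf{e}_l}$ to make these constants collapse. Granting this, (ii) gives \eqref{opt.grad_k}, hence (i), with $V_k=\tfrac{1}{k+1}(x^{k+1})^Tp_k$; conversely, if (i) holds, I would expand $V_k=\tfrac{1}{k+1}(x^{k+1})^Tp_k$ in the lexicographic basis, apply the identification to obtain $\big(\partial V_k/\partial x\big)^T=P'x^k$ with $\mathrm{vec}(P')=K_k^Tp_k$, compare with $\big(\partial V_k/\partial x\big)^T=P_kx^k$ from (i), and conclude $P'=P_k$ because distinct monomials are linearly independent; thus $\mathrm{vec}(P_k)=K_k^Tp_k$, which is (ii).

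The step I expect to be the main obstacle is exactly this identification of $K_k^T$ with the lexicographic differentiation operator, i.e.\ the bookkeeping of the normalization constants $c_{k_1,\dots,k_n}=\sqrt{k!/(k_1!\cdots k_n!)}$; everything there hinges on the consecutive-multinomial identity above. If that calculation proves cumbersome, an alternative is a dimension count: the $P_k$ satisfying (i) and those satisfying (ii) each form a linear subspace of $\mathbb{R}^{n\times m_k}$ of dimension $m_{k+1}$ --- the first because $W\mapsto(\text{matrix of }\nabla W\text{ in the }x^k\text{ basis})$ is a linear isomorphism from the degree-$(k+1)$ homogeneous polynomials onto it (injective since $\nabla W=0$ forces $W=0$, and $W$ is recovered from $\nabla W$ by Euler), and the second because $K_k^T$ has full column rank $m_{k+1}$, since $K_kK_k^T=I_{m_{k+1}}$ (which follows from $L_k^TL_k=I$, $L_{k+1}^TL_{k+1}=I$, and $L_{k+1}=(L_k\otimes I_n)K_k^T$). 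It would then suffice to check the single inclusion that every $P_k$ with $\mathrm{vec}(P_k)=K_k^Tp_k$ obeys (i), which is the easy half of the preceding paragraph.
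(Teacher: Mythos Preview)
Your proof is correct, and for part~(i) it is essentially the same Euler-identity computation as the paper's. For part~(ii), however, you take a genuinely different route. The paper rewrites \eqref{opt.grad_k} as $\mathrm{vec}(P_k)^T N(x)=0$ for all $x$, where $N(x)$ is a specific matrix-valued polynomial, and then uses Proposition~\ref{vectors power prop} (the identity $\langle y^k,x^k\rangle=(y^Tx)^k$) to show $(y^k\otimes y)^T N(x)=0$ for all $x,y$; this yields $\mathcal{R}(N(x))\subseteq\mathcal{N}(K_k)$, and a short contradiction argument upgrades the inclusion to $\oplus_x\mathcal{R}(N(x))=\mathcal{N}(K_k)$, whence $\mathrm{vec}(P_k)\in\mathcal{N}(K_k)^\perp=\mathcal{R}(K_k^T)$. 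Your approach instead identifies $K_k^T$ concretely as the ``differentiate then re-express'' map on degree-$(k{+}1)$ polynomials, verified entrywise via the multinomial identity $m_l\binom{k+1}{\mathbf{m}}=(k+1)\binom{k}{\mathbf{m}-\mathbf{e}_l}$; this gives (ii)$\Rightarrow$(i) directly, and for the converse you simply expand $V_k$ in the $x^{k+1}$ basis and match coefficients. Your dimension-count alternative (both subspaces have dimension $m_{k+1}$, and $K_kK_k^T=I$ via $L_{k+1}=(L_k\otimes I_n)K_k^T$) is also sound and arguably the cleanest path. The trade-off: the paper's argument is coordinate-free and leverages Proposition~\ref{vectors power prop}, avoiding any bookkeeping with the normalization constants $c_{k_1,\dots,k_n}$; your argument is more elementary and self-contained but requires the explicit monomial calculation you flagged as the main obstacle (which does indeed go through exactly as you say).
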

This is a very important proposition which is missed in \cite{yoshida1989quadratic}. Additionally, the second part of Proposition \ref{prop:sym} leads to a more computationally efficient methodology than the one in \cite{AlmubarakSadeghandTaylor2019}. To illustrate the symmetry condition, let us give an example of $P_2$ in the 2-dimensional ($n=2$) case. The Kronecker product $x^2\otimes x$ is reduced to $x^3=\begin{bmatrix}
x_1^3,\sqrt{3} x_1^2 x_2, \sqrt{3} x_1 x_2^2,x_2^3 \end{bmatrix}^T$ via $x^2\otimes x=K_2^T x^{3}$ with
$$
K_2=\begin{bmatrix}
1&0&0&0&0&0\\0&\frac{1}{\sqrt{3}}&\sqrt{\frac{2}{3}}&0&0&0\\0&0&0&\sqrt{\frac{2}{3}}&\frac{1}{\sqrt{3}}&0\\0&0&0&0&0&1
\end{bmatrix}
$$
The reduced $P_2$ satisfying the symmetry condition (i.e., vec$(P_2)=K_2^Tp_2$) is given by \(P_2= \begin{bmatrix} a&\sqrt{2}b&c\\b&\sqrt{2} c&d \end{bmatrix}\) for $p_2=[a, \sqrt{3} b, \sqrt{3} c, d]^T$. The cubic component of the value function corresponding to $P_2$ is
$$
V^*_3(x)=\frac{1}{3} x^TP_2x^2=\frac{a}{3}x_1^3+bx_1^2x_2+c x_1 x_2^2+\frac{d}{3}x_2^3
$$
The gradient of $V_3^*$ is easily seen to be $V^*_{x3}(x)=P_2x^2$ as expected. Note that for an an arbitrary (non-symmetric) $S_2\in \mathbb{R}^{2 \times 3}$, $S_2x^2$ is not a gradient of any cubic function. 
Once the symmetry condition for each $P_k$ holds, the value function's gradient can be expressed as 
\begin{equation} \label{grad.opt.val}
V_x^*(x)=  \sum_{k \geq 1} {P_k x^k}
\end{equation}
leading to the optimal control law 
\begin{equation} \label{exxpanded.optimal.control}
  u^*(x)=-\phi \Big(\sum_{k\geq 1} g(x)^TP_k x^k\Big)  
\end{equation}
The input penalty function $R(u)$ can also be used to enforce input constraint \citep{lyshevski1998optimal}. For instance, if $\phi(v)=[\tanh(v_1) \; \ldots \; \tanh(v_m)]^T$, which we will use in an example in Section \ref{sec_Algorithm Implementation Examples and Simulation Studies} to confront saturation, then each input is restricted to $[-1,1]$. The resulting $R$ and $\psi$ are
$\psi(v)=\sum_{i=1}^m \ln(\cosh(v_i))$ and $R(u)=\sum_{k \geq 1} \frac{\|u\|^{2k}_{2k}}{2(2k-1)k}$ where $\|u\|_p$ denotes the $p$-norm: $\|u\|_p^p=\sum_{i=1}^m |u_i|^p$.
\subsection{Nonlinear Regulator (NLR) Detailed Algorithms}

The first algorithm we present in this subsection, solves the HJB equation \eqref{HJB} sequentially by exploiting the Taylor series expansions of $V_x^*$ in \eqref{grad.opt.val}, the system vector field $f(x)=\sum_{j\geq 1}{F_j x^j}$, and other relevant functions:
\begin{equation} \label{HJBk}
\sum_{k,j \geq 1} x^{jT} F_j^T P_k x^k  - \Psi\Big(g^T \sum_{k \geq 1} P_k x^k\Big) +Q(x)=0
\end{equation} 
To solve for $P_k$, the coefficients of all the independent terms of order $k+1$ in \eqref{HJBk} are set to zero. By doing so, a set of linear matrix equations results that is solved for $P_k$ to render the value function up to a desired order of truncation. The procedure starts by solving the ARE for $P_1$. The subsequent $P_k$'s are evaluated recursively based on previous $P_k$'s and the other known data.
\\
\begin{myalg} \label{algorithm.main}
Recursive Closed Form Solution of the NLR
\end{myalg} 
Given analytic functions $f(x)$, $g(x)$, $Q(x)$, $R(u)$ satisfying A$_1$-A$_3$, and the approximation order $\bar{k}$, execute the following steps to compute $P_k$, $k=1,\ldots,\bar{k}$, used to find the Taylor expansion of the optimal cost-to-go $V^*(x)$ and the optimal control $u^*(x)$: 

1. Determine the matrix components
\begin{enumerate}[i.]
\item $F_k$ of $f(x)=\sum_{k=1}^{\bar{k}} {F_k x^k}+O(x^{\bar{k}+1})$,
\item  $G_0$ and $G_{ik}$ of $g_i(x)=g_{i0} + \sum_{k=1}^{\bar{k}} {G_{ik} x^k}+O(x^{\bar{k}+1})$ for $i=1,\ldots,m$, where $g_i$ is the $i$-th column of $g$,
\item  $Q_{k}$ of $Q(x)=\half x^T Q_1 x+ \sum_{k=2}^{\bar{k}} x^T Q_k x^k + O(x^{\bar{k}+2}) $,
\item $\tilde{R}_{k}$ of $\Psi(v)=\half v^T \tilde{R}_1 v + \sum_{k=2}^{\bar{k}}v^T \tilde{R}_k v^k + O(v^{\bar{k}+2})$,
where both $Q_k$ and $\tilde{R}_k$ are assumed to be symmetric tensors of order $k$. Note that $\tilde{R}_1=R_1^{-1}$. 
\end{enumerate}
The main operation needed to implement the algorithm is the dot product of two power series. To this end, let $[s]_k$ denote the $k$-th tensor coefficient of the power series $s(x)=\sum_{k\geq 0}S_k x^k$,i.e., $[s]_k=S_k$. The dot product $s^Tl(x)=l^Ts(x)$ of $s(x)$ with another power series $l(x)=\sum_{k\geq 0}L_k x^k$ is given by
$$
[s^Tl]_k=\sum_{j=0}^{k}K_{k-j,j} S_j^TL_{k-j}
$$
where the matrix $K_{i,j}$ reduces $x^i \otimes x^j$ to $x^{i+j}$: $(x^i \otimes x^j)^T=x^{(i+j)T}K_{i,j}$. Similarly, the product $g^Ts(x)$ of a matrix valued power series $g(x)=[g_1(x) \cdots g_m(x)]$ with $s(x)$ is evaluated from $s^Tg_i$, $i=1,\dots,m$. For convenience, we denote $K_{k,1}$ by $K_k$ for $k=1,2, \ldots,\bar{k}$.

2. For $k=1$, the quadratic component of the HJB equation \eqref{HJBk} yields the ARE
$$
F_1^T P_1+P_1 F_1 + Q_1 -P_1 G_0 R_1^{-1} G_0^T P_1=0
$$
Assumption A$_1$, guarantees that the solution $P_1$ to the ARE is a symmetric positive definite matrix.

3. For $k\geq2$, compute $P_k$ by collecting the $(k+1)$-th order components of the HJB equation \eqref{HJBk}:
\begin{align*}
 & x^T F_1^TP_k  x^k + x^TQ_kx^k - x^TP_1 G_0 R_1^{-1} G_0^T P_k x^k \\
 &  + ([f^Th_k]_{k+1}-[\Psi(g^Th_k)]_{k+1})x^{k+1}=0
\end{align*}
where $h_{k} (x) = \sum_{j=1}^{k-1}P_j x^j$. Interestingly, we have a negative feedback from the linear solution multiplied by the unknown matrix in the third term. Defining the closed-loop matrix $F_c=F_1- G_0R_1^{-1} G_0^TP_1$, we have
\begin{equation*}
x^T(F_c^TP_k+Q_k)x^k+[f^Th_k-\Psi(g^Th_k)]_{k+1}x^{k+1}=0
\end{equation*}
Vectorizing the preceding equation using the Kronecker product identities, we get  
\begin{equation*}
(x^k\otimes x)^T  \text{vec}(F_c^T P_k+Q_k) + [f^Th_k-\Psi(g^Th_k)]_{k+1}x^{k+1} =0
\end{equation*}
Again, the Kronecker product will create some repeated basis. As shown previously, there exists a reducer matrix $K_k \in \mathbb{R}^{nm_k\times m_{k+1}}$ such that $x^k \otimes x= K_k^T x^k$. Thus
\begin{equation*}
x^{(k+1)T} \Big( K_k (I \otimes F_c^T) \text{vec}(P_k) +[f^Th_k-\Psi(g^Th_k)]_{k+1}^T +q_k\Big) =0
\end{equation*}
where ${q}_{k} =K_k\text{vec} (Q_{k})$. Before solving for $P_k$, we need to enforce symmetry as given by Proposition \ref{prop:sym}: That is, $\text{vec}(P_k) = K_k^T p_k$ where $p_k \in \mathbb{R}^{m_{k+1}}$ is an unknown vector. Since $x^{k+1}$ is an arbitrary vector, it can be dropped from the preceding equation reducing it to 
$$M_k p_k+[f^Th_k-\Psi(g^Th_k)+q_k]_{k+1}^T+{q}_{k}=0
$$ where $M_{k}= K_k (I \otimes F_c^T) K_k^T$. Solving for $p_k$ yields:
\begin{align} \label{Vec_Pk}
\begin{split}
\text{vec}(P_k) = K_k^T M_{k}^{-1}  \left( [\Psi(g^Th_k)-f^Th_k]_{k+1}^T-q_{k} \right)
\end{split}
\end{align}
Finally, reshape $\text{vec}(P_k)$ into an $n \times m_k$ matrix to get $P_k$.

4. The optimal controller resulting from the computed value function at the end step $\bar{k}$ is given by
$$
u_{\bar{k}}^* (x)=-\phi \big( g(x)^T \sum_{k=1}^{\bar{k}}P_k x^k \big)
$$
Note that in the case of a quadratic cost, the controller will be reduced to $
u_{\bar{k}}^* (x)=- R_1^{-1} g(x)^T \sum_{k=1}^{\bar{k}}P_k x^k
$.

End of Algorithm.

The algorithm for most part is self explanatory. The main remaining issue that needs to be addressed is the validity of inverting $M_k$ required at each step $k\geq 2$, which is established by the next Lemma. 
\begin{lemma} \label{invM_lemma}
Matrix $M_{k}= K_k (I \otimes F_c^T) K_k^T$ is invertible, $\forall k \geq 2$. Furthermore, if $F_c$ is negative definite, then $\inf_{\|v\|=1} \|M_k v\| \geq \alpha^{-1}$ or equivalently $\|M_k^{-1}\| \leq \alpha$, $\forall k \geq 1$, with $\alpha=2/\lambda_{\min}(-F_c-F_c^T)$.
\end{lemma}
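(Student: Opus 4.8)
The plan is to realize $M_k$ as the matrix of a Lie-derivative-type operator on homogeneous polynomials and then exploit that the closed-loop matrix $F_c$ is Hurwitz. First I would make this identification precise. Fix $k\ge 1$ and $v\in\mathbb{R}^{m_{k+1}}$; by Proposition~\ref{prop:sym} the reduced tensor $P$ with $\mathrm{vec}(P)=K_k^Tv$ is a legitimate symmetric tensor, so $V(x):=\tfrac{1}{k+1}x^TPx^k$ is a homogeneous polynomial of degree $k+1$ with gradient $(\partial V/\partial x)^T=Px^k$. Using $x^k\otimes x=K_k^Tx^{k+1}$, $K_kK_k^T=I$, $(x^k\otimes x)^T=x^{kT}\otimes x^T$, the mixed-product rule, and $\mathrm{vec}(ABC)=(C^T\otimes A)\mathrm{vec}(B)$, a direct computation yields the two identities $x^{(k+1)T}v=x^TPx^k=(k+1)V(x)$ and $x^{(k+1)T}M_kv=x^TF_c^TPx^k=(\partial V/\partial x)\,F_cx$. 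In words, $x^{(k+1)T}M_kv$ is exactly the derivative of $V$ along the linear vector field $\dot x=F_cx$, and $v\mapsto V$ is injective since the entries of $x^{k+1}$ are linearly independent (scaled) monomials of degree $k+1$.

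Invertibility then follows from Hurwitzness of $F_c$. Recall that $F_c=F_1-G_0R_1^{-1}G_0^TP_1$ is Hurwitz, since $P_1$ is the stabilizing solution of the ARE under A$_1$. If $M_kv=0$, the identity above gives $(\partial V/\partial x)F_cx\equiv 0$, hence $\tfrac{d}{dt}V(e^{F_ct}x)=0$ and $V(e^{F_ct}x)=V(x)$ for all $t\ge 0$; letting $t\to\infty$ and using $e^{F_ct}x\to 0$, continuity of $V$, and $V(0)=0$, we obtain $V\equiv 0$, so $x^{(k+1)T}v\equiv 0$, so $v=0$. Thus $M_k$ is an injective square matrix, hence invertible, for every $k\ge 1$ (in particular $k\ge 2$). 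Equivalently, $(k+1)M_k$ represents the Lie derivative on degree-$(k+1)$ forms, whose spectrum is $\{\sum_i d_i\mu_i:\ d_i\ge 0,\ \sum_i d_i=k+1\}$ with $\mu_i\in\mathrm{spec}(F_c)$, and all of these numbers have negative real part.

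For the quantitative bound, assume in addition $F_c+F_c^T<0$ and set $\alpha=2/\lambda_{\min}(-F_c-F_c^T)$. Given $v\in\mathbb{R}^{m_{k+1}}$, put $w=K_k^Tv$, so $\|w\|=\|v\|$ because $K_kK_k^T=I$, and partition $w=(w_1,\dots,w_{m_k})$ with $w_j\in\mathbb{R}^n$. Then
\[
v^TM_kv=w^T(I\otimes F_c^T)w=\sum_{j=1}^{m_k}w_j^TF_cw_j\le-\tfrac12\lambda_{\min}(-F_c-F_c^T)\sum_{j=1}^{m_k}\|w_j\|^2=-\alpha^{-1}\|v\|^2.
\]
By Cauchy--Schwarz, $\|M_kv\|\,\|v\|\ge|v^TM_kv|\ge\alpha^{-1}\|v\|^2$, hence $\|M_kv\|\ge\alpha^{-1}\|v\|$ for every $v$; that is, $\inf_{\|v\|=1}\|M_kv\|\ge\alpha^{-1}$ and $\|M_k^{-1}\|\le\alpha$, for all $k\ge 1$.

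I expect the main obstacle to be the first step: carefully pushing the $K_k$/$L_k$ reduction through the Kronecker-product identities so that $M_k$ is cleanly identified with the directional-derivative (Lie-derivative) operator on homogeneous polynomials, together with the invocation that $F_c$ is Hurwitz (the standard stabilizing-solution/detectability argument for the ARE under A$_1$). Once that identification is in place, the flow argument for invertibility and the Cauchy--Schwarz estimate for the norm bound are routine.
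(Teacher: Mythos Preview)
Your proof is correct and differs from the paper's in both parts. For invertibility, the paper does not use the Lie-derivative/flow viewpoint at all: it first passes (via the Lyapunov equation $F_c^TP_c+P_cF_c+I=0$ and $T=\sqrt{P_c}$) to coordinates in which $\hat F_c=TF_cT^{-1}$ is negative definite, proves the quantitative lower bound for $\hat M_k$ there, and then observes that a change of basis in $\mathbb{R}^n$ preserves invertibility of $M_k$. Your argument bypasses the coordinate change entirely by recognizing $x^{(k+1)T}M_kv$ as $\nabla V\cdot F_cx$ and letting the Hurwitz flow $e^{F_ct}$ kill the kernel; this is cleaner and simultaneously handles Lemma~\ref{invM_lemma} and the ``coordinate transformation preserves invertibility'' step that the paper leaves terse.

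For the norm bound, both proofs rest on the same two facts: $K_kK_k^T=I$ (so $\|K_k^Tv\|=\|v\|$) and $v^T(M_k+M_k^T)v=w^T(I\otimes(F_c+F_c^T))w\le-\lambda_{\min}(-F_c-F_c^T)\|v\|^2$ for $w=K_k^Tv$. The paper extracts $\|M_kv\|\ge\alpha^{-1}$ by expanding $(M_k+\sigma I)^T(M_k+\sigma I)\ge 0$ and optimizing over $\sigma$; your one-line Cauchy--Schwarz estimate $|v^TM_kv|\le\|M_kv\|\,\|v\|$ reaches the identical constant more directly. What the paper's packaging buys is that Lemmas~\ref{invM_lemma} and~\ref{invM_lemma2} come out of a single computation; what yours buys is a self-contained, coordinate-free argument for Lemma~\ref{invM_lemma} alone.
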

In spite of being invertible, there is no guarantee that the inverse of $M_k$ remains bounded as $k$ increases. In fact, a simple numerical example can be constructed to illustrate that $M_k^{-1}$ can grow unboundedly whenever $F_c$ is not negative definite. For instance,
$F_1=\begin{bmatrix}
0 & 1\\1 & 1
\end{bmatrix},
G=\begin{bmatrix}
0 \\ 1
\end{bmatrix}$, $Q_1=100I$, and $R_1=1$ produce gain $\kappa=R_1^{-1}G_0^TP_1=[11.050 \;\;  12.0950]$ and $F_c=F_1-G_0K_1=\begin{bmatrix} 0 & 1\\-10.050 & -11.095 \end{bmatrix}$ is sign indefinite. While $\|M_k\| \leq \|F_c\|$, $\forall k \geq 1$, the 2-norm of $M_k^{-1}$ for $k=5, 50, 100$ grows to $5.758$, $2.14 \times 10^4$, and $9.45 \times 10^7$, respectively. This indicates that the original system is poorly conditioned. Fortunately, this issue can be easily resolved by means of a {\em linear} state transformation. Intuitively, a linear transformation improves conditionality by scaling and rotating the state-space coordinate axes. The following Lemma, which complements Lemma \ref{invM_lemma}, guarantees the existence of such a linear transformation.
\begin{lemma} \label{invM_lemma2}
There exists a symmetric positive definite transformation matrix $T \in \mathbb{R}^{n \times n}$ with respect to which $\hat{F}_c=TF_cT^{-1}$ is negative definite and the resulting $\hat{M}_{k}= K_k (I_{m_k} \otimes \hat{F}_c^T) K_k^T$ satisfies the conclusions of Lemma \ref{invM_lemma}.  
\end{lemma}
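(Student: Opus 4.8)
The plan is to construct $T$ from the Lyapunov matrix of the closed-loop system and then invoke Lemma~\ref{invM_lemma} for the transformed matrix. Recall first that the closed-loop matrix $F_c=F_1-G_0R_1^{-1}G_0^TP_1$ is Hurwitz --- this is the defining property of the stabilizing ARE solution $P_1$, which is symmetric positive definite by A$_1$ (cf.\ step~2 of Algorithm~\ref{algorithm.main}). Hence, by the Lyapunov theorem, there is a unique symmetric positive definite $S$ with $F_c^TS+SF_c=-I$. I would then set $T:=S^{1/2}$, the symmetric positive definite principal square root of $S$, and claim this is the transformation asserted by the Lemma.

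Next I would check that $\hat F_c:=TF_cT^{-1}=S^{1/2}F_cS^{-1/2}$ is negative definite. Conjugating the symmetric part by the invertible matrix $T$,
\[
T\bigl(\hat F_c+\hat F_c^T\bigr)T \;=\; SF_c+F_c^TS \;=\; -I ,
\]
so $\hat F_c+\hat F_c^T=-T^{-2}=-S^{-1}<0$; in particular $\lambda_{\min}(-\hat F_c-\hat F_c^T)=1/\|S\|>0$. This establishes the first assertion of the Lemma.

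For the second assertion I would observe that $\hat M_k=K_k(I_{m_k}\otimes\hat F_c^T)K_k^T$ has exactly the form of the matrix $M_k$ in Lemma~\ref{invM_lemma}, with $\hat F_c$ in place of $F_c$ and the same reducer $K_k$ (which depends only on $n$ and $k$). Moreover, $\hat F_c$ is itself a legitimate NLR closed-loop matrix: the linear change of coordinates $\hat x=Tx$ transforms \eqref{gen.sys}--\eqref{cost.func} into a problem still satisfying A$_1$--A$_3$, with $\hat F_1=TF_1T^{-1}$, $\hat G_0=TG_0$, $\hat R_1=R_1$, whose ARE is solved by $\hat P_1=T^{-1}P_1T^{-1}$ and whose closed-loop matrix is $TF_cT^{-1}=\hat F_c$. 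Applying Lemma~\ref{invM_lemma} to this transformed problem --- in which the closed-loop matrix $\hat F_c$ is now negative definite --- immediately yields that $\hat M_k$ is invertible for all $k\ge2$ and that $\|\hat M_k^{-1}\|\le 2/\lambda_{\min}(-\hat F_c-\hat F_c^T)=2\|S\|$ for all $k\ge1$, i.e.\ the conclusions of Lemma~\ref{invM_lemma}.

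The construction is short, so I do not expect a single deep obstacle; the step needing the most care is the last one --- verifying that $\hat F_c$ genuinely plays the role of ``$F_c$'' in Lemma~\ref{invM_lemma} rather than merely resembling it structurally. The change-of-variables computation above supplies this, and it also makes explicit that the only property of the closed-loop matrix being used is negative definiteness. A minor subsidiary point is the Hurwitz property of $F_c$ needed to produce $S$: when the ARE residual $Q_1+P_1G_0R_1^{-1}G_0^TP_1$ is positive definite one may simply take $S=P_1$ and $T=P_1^{1/2}$, and in general the Hurwitz property follows from the standard detectability/LaSalle argument for the optimal closed loop; in every case the resulting $T=S^{1/2}$ works verbatim.
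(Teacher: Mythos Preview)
Your construction is correct and is exactly the one the paper uses: solve the Lyapunov equation $F_c^TP_c+P_cF_c+I=0$ for $P_c>0$, set $T=P_c^{1/2}$, and conclude $\hat F_c+\hat F_c^T=-P_c^{-1}<0$, after which Lemma~\ref{invM_lemma} applies verbatim to $\hat F_c$. The paper in fact proves Lemmas~\ref{invM_lemma} and~\ref{invM_lemma2} jointly, deriving the bound directly for $\hat F_c$ rather than invoking Lemma~\ref{invM_lemma} as a black box, but the content is identical; your extra verification that $\hat F_c$ is the closed-loop matrix of the transformed NLR problem is not needed (only negative definiteness is used, as you yourself note), though it does no harm.
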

As can be seen from the proof of Lemma \ref{invM_lemma2}, the transformation matrix $T=\sqrt{P_c}$  where $P_c$ satisfies the Lyapunov equation $F_c^T P_c+P_cF_c+I=0$. Following this procedure, $T=\begin{bmatrix} 1.025  &  0.223 \\ 0.223  &  0.223 \end{bmatrix}$ and $\alpha=4.608$ for our numerical example. The resulting $\|\hat{M}_k^{-1}\|$ for $k=5, 50, 100$, which are $1.623$, $3$, and $3.489$, respectively, stay well below $\alpha$ as expected. This finding necessitates a linear state transformation of the original system and cost function whenever $F_c$ is poorly conditioned. \\
\begin{myalg} \label{algorithm.modified}
Modified (NLR) Algorithm with guaranteed numerical conditionality
\end{myalg} 
 1. Execute steps 1 and 2 of Algorithm \ref{algorithm.main} and compute $F_c=F_1-G_0R_1^{-1}G_0^TP_1$ and $\bar{F}_c=-\half(F_c+F_c^T)$. If $\bar{F}_c>0$ and $\alpha=1/\lambda_{\min}(\bar{F}_c)$  is less than a certain threshold, proceed with the remaining steps of Algorithm \ref{algorithm.main}.  Else, continue.
 
 2. Solve the Laypunon equation $F_c^T P_c+P_cF_c+I=0$ for $P_c>0$ and set the linear transformation matrix to $T=\sqrt{P_c}$. Replace $f(x)$, $g(x)$, and $Q(x)$ by their transformed equivalents
$  f(x) \leftarrow Tf(T^{-1}x)$, $g(x)  \leftarrow Tg(T^{-1}x)$ and $Q(x) \leftarrow Q(T^{-1} x)$. Then, return to step 1 of Algorithm \ref{algorithm.main} with the new data.

End of Algorithm.

The next Theorem summarizes the main contributions of the preceding algorithms and their computational efficiency.
\begin{theorem}
The NLR Algorithms \ref{algorithm.main} and \ref{algorithm.modified} exactly compute each matrix component
$P_{k}\in%
\mathbb{R}
^{n\times m_{k}}$ of the value function $V^*(x)$ in \eqref{opt.val} in order to satisfy the HJB equation \eqref{HJBk} up to a prescribed order
$\bar{k}$. Each computed $P_{k}$ satisfies the symmetry condition required to produce the optimal control function \eqref{exxpanded.optimal.control}. Moreover, the computational complexity of each algorithm grows at most polynomially in $\bar{k}$ and Algorithm \ref{algorithm.modified} guarantees that the numerical procedure (i.e., inversion of $M_k$) used to compute $P_k$ is well-conditioned (i.e. the condition number $\|M_k\|\|M_k^{-1}\|$ of $M_k$ is uniformly bounded) regardless of $k$.
\label{thm:sum}
\end{theorem}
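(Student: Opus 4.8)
The claim bundles four assertions, which I would establish in order: exact recovery of every $P_k$, the symmetry of each $P_k$, polynomial growth of the cost, and uniform conditioning under Algorithm~\ref{algorithm.modified}. The engine for the first two is an induction on $k$ whose hypothesis at level $k$ is: the equation solved at step~3 (the ARE of step~2 when $k=1$) is exactly the order-$(k+1)$ component of the HJB equation \eqref{HJBk}, and its unique solution is the true Taylor coefficient $P_k$ of the value function $V^*$ that Theorem~\ref{theorem1} guarantees to exist and be analytic. The base case $k=1$ is classical: under A$_1$ the ARE has a unique symmetric positive-(semi)definite stabilizing solution, and that solution is the Hessian of $V^*$ at the origin.

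For the induction step I would substitute $V_x^*=\sum_{j\ge1}P_jx^j$ into \eqref{HJBk} and track where $P_k$ can contribute to the coefficient of $x^{k+1}$. It enters linearly through $x^TF_1^TP_kx^k$ coming from $V_x^{*T}f$, and linearly through the single cross term $x^TP_1G_0R_1^{-1}G_0^TP_kx^k$ produced by the part of $\Psi(g^TV_x^*)$ quadratic in $g^TV_x^*$; every other occurrence of $P_k$ --- through higher-order coefficients of $g$, through higher-order terms of $\Psi$, or through products of three or more factors of $g^TV_x^*$ --- is of order $\ge k+2$ and hence invisible at order $k+1$. All remaining order-$(k+1)$ terms depend only on $P_1,\dots,P_{k-1}$, i.e.\ on $h_k=\sum_{j=1}^{k-1}P_jx^j$ and on the known data, and are precisely what the truncated power-series products of step~1 evaluate. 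This reproduces the displayed equation of step~3 and, after vectorizing and reducing via $x^k\otimes x=K_k^Tx^{k+1}$, the linear system in $\text{vec}(P_k)$ with closed-loop matrix $F_c=F_1-G_0R_1^{-1}G_0^TP_1$. To finish I would invoke Proposition~\ref{prop:sym}: the true degree-$(k+1)$ part of $V^*$ has a symmetric Hessian, so its reduced coefficient satisfies $\text{vec}(P_k)=K_k^Tp_k$, which means imposing that parametrization loses nothing and collapses the (underdetermined) vectorized system to the square system $M_kp_k+[f^Th_k-\Psi(g^Th_k)]_{k+1}^T+q_k=0$ with $M_k=K_k(I\otimes F_c^T)K_k^T$. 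Lemma~\ref{invM_lemma} makes $M_k$ invertible, so \eqref{Vec_Pk} is its unique solution; since the true coefficient solves the same system, the algorithm returns it, and by construction it satisfies the symmetry condition of Proposition~\ref{prop:sym} needed in \eqref{exxpanded.optimal.control}. This closes the induction for Algorithm~\ref{algorithm.main}; for Algorithm~\ref{algorithm.modified} I would note that the preliminary linear change of coordinates $x\mapsto Tx$ sends analytic data obeying A$_1$--A$_3$ to analytic data still obeying them and sends $V^*$ to $V^*\!\circ T^{-1}$, so the identical argument applies to the transformed problem.

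For the complexity bound, fix $n$; then $m_k=\binom{n+k-1}{k}=O(k^{n-1})$, so at step $k$ every object handled --- the coefficients $F_k,G_{ik},Q_k,\tilde R_k$, the vector $q_k$, the matrix $M_k\in\mathbb{R}^{m_{k+1}\times m_{k+1}}$, and the truncated products $[f^Th_k]_{k+1}$ and $[\Psi(g^Th_k)]_{k+1}$ (each a sum of $O(k)$ contractions of polynomially sized tensor coefficients) --- has size polynomial in $k$, and the dominant operation, inverting $M_k$, costs $O(m_{k+1}^3)=O(k^{3(n-1)})$; summing these per-step costs over $k\le\bar k$ keeps the total polynomial in $\bar k$, while the extra work in Algorithm~\ref{algorithm.modified} (one Lyapunov solve and one symmetric square root) is $O(n^3)$ and independent of $\bar k$. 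For the conditioning claim I would chain Lemmas~\ref{invM_lemma2} and \ref{invM_lemma}: after the transformation $\hat F_c=TF_cT^{-1}$ is negative definite, so $\|\hat M_k^{-1}\|\le\alpha=2/\lambda_{\min}(-\hat F_c-\hat F_c^T)$ for all $k$, while $\|\hat M_k\|\le\|I\otimes\hat F_c^T\|=\|\hat F_c\|$ because the reducer $K_k$ has $\|K_k\|=\|K_k^T\|=1$ (the same orthonormality that makes $L_k^TL_k=I$); hence $\|\hat M_k\|\,\|\hat M_k^{-1}\|\le\alpha\|\hat F_c\|$ uniformly in $k$.

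I expect the main obstacle to be the bookkeeping inside the induction step: verifying rigorously that $P_k$ enters the order-$(k+1)$ part of \eqref{HJBk} \emph{only} through the two linear terms above, and --- the conceptual point the paper highlights --- that imposing $\text{vec}(P_k)=K_k^Tp_k$ is simultaneously necessary (so that $P_kx^k$ is a genuine gradient, Proposition~\ref{prop:sym}), non-lossy (because the analytic $V^*$ already has symmetric coefficients), and exactly what turns an underdetermined vectorized equation into the square, invertible system of Lemma~\ref{invM_lemma}. Once that interplay is nailed down, the complexity and conditioning parts follow routinely from the polynomial growth of $m_k$ and Lemmas~\ref{invM_lemma}--\ref{invM_lemma2}.
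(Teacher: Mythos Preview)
Your proposal is correct and follows the same route as the paper's own proof: the exactness and symmetry claims are attributed to the construction of the algorithm together with Proposition~\ref{prop:sym} and Lemmas~\ref{invM_lemma}--\ref{invM_lemma2}, while the complexity and conditioning claims come from the polynomial growth of $m_k$ and the uniform bounds $\|M_k\|\le\|F_c\|$ and $\|M_k^{-1}\|\le\alpha$. Your write-up is considerably more detailed than the paper's terse argument (in particular, you make the induction on $k$ and the role of the symmetry parametrization $\text{vec}(P_k)=K_k^Tp_k$ explicit), but the underlying strategy is identical.
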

\begin{proof}
The first part of the theorem follows from the described steps in Algorithms \ref{algorithm.main} and \ref{algorithm.modified} together with Lemmas \ref{invM_lemma} and \ref{invM_lemma2}. To prove the growth order, it should be clear that the computational complexity of each term in \eqref{Vec_Pk} including $[\Psi(g^Th_k)]_{k+1}$ is at most polynomial since the length of $p_k$ is $m_{k+1}=O(k^n)$ and the complexity of $[(g^Th_k)^j]_{k+1}$, $j\leq k$, needed to compute the power series for $\Psi(.)$ is also of polynomial growth rate. Therefore, the overall complexity of the algorithm is at most polynomial in the truncation order $\bar{k}$, which completes the proof.
\end{proof}
\section{Convergence Analysis and Radius Estimations} \label{sec_Convergence Analysis and Radius Estimations}
The proposed method in this paper efficiently generates the exact Taylor expansion of the value function and the associated optimal control law for analytic systems. The main goal of this section will be to estimate the {\em region of convergence} (ROC) associated with the generated value function $V^*$ and the resulting optimal controller. Formally, we define the region of convergence ROC$(f)$ of an arbitrary analytic function $f$ to be the largest open neighborhood of the origin in which $f$ is analytic. The radius $r_{\rm con}(f)$ of convergence on the other hand is the radius of the largest $n$-sphere that can be inscribed in ROC$(f)$. The importance of estimating the ROC is that it allows the user of the algorithm to know where it works best. It should be pointed out that it is guaranteed that higher orders of approximation lead to a better performance inside the ROC. On the contrary, if we operate outside the ROC, the controller is only suboptimal without any guarantees of stability. For a single variable analytic function, it is well known (see for example \citet[Theorem IV.7]{flajolet2009analytic}) that $r_{\rm con}(f)^{-1}=\lim_{k\rightarrow \infty}\sup|[f]_k|^{\frac{1}{k}}$ where $[f]_k$ is $k$-th Taylor series coefficient of $f$. The next Theorem builds on the single variable case to formulate the ROC of the value function ($\text{ROC}(V^*)$) based on the limiting behaviour of the matrix coefficients $P_k$'s. 
\begin{theorem} \label{theorem.conv}
Let $P_k$ be the matrix coefficients generated by Algorithm \ref{algorithm.modified} and define the {\em directional} radius of convergence along a unit vector $\upsilon \in \mathbb{R}^n$ by
$$r^*_\upsilon =\Big(\lim_{k \rightarrow \infty}\sup \|P_k\upsilon^k\|^{\frac{1}{k}}\Big)^{-1}$$
Then, the value function $V^*(x)=\sum_{k\geq 1} \frac{1}{k+1} x^T P_k x^k$ on
$${\rm ROC}(V^*) = \{x \in \mathbb{R}^n: x=r\upsilon, \; \|\upsilon\|=1, \; 0 \leq r < r^*_\upsilon\}.$$
\end{theorem}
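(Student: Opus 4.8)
The plan is to reduce the $n$-variable convergence question to the classical one-variable Cauchy--Hadamard theorem by restricting the series to rays through the origin. First I would fix a unit vector $\upsilon\in\mathbb{R}^n$ and examine $V^*$ along the ray $x=r\upsilon$, $r\geq 0$. Every entry of the lexicographic listing $x^k$ in \eqref{xk} is homogeneous of degree $k$ in $x$ (a constant times a monomial $x_1^{k_1}\cdots x_n^{k_n}$ with $k_1+\cdots+k_n=k$), so $(r\upsilon)^k=r^k\upsilon^k$ as vectors. Substituting $x=r\upsilon$ into \eqref{opt.val} therefore produces the honest scalar power series
\begin{equation*}
V^*(r\upsilon)=\sum_{k\geq 1}\frac{(r\upsilon)^T P_k (r\upsilon)^k}{k+1}=\sum_{k\geq 1}\frac{\upsilon^T P_k\upsilon^k}{k+1}\,r^{k+1},
\end{equation*}
and convergence of the Taylor series of $V^*$ at a point $x=r\upsilon$ of the stated set means, by definition, convergence of this scalar series at $r$.

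Next I would apply the root test. The radius of convergence $\rho_\upsilon$ of the scalar series satisfies
\begin{equation*}
\rho_\upsilon^{-1}=\limsup_{k\to\infty}\Bigl|\tfrac{1}{k+1}\,\upsilon^T P_k\upsilon^k\Bigr|^{\frac{1}{k+1}}=\limsup_{k\to\infty}\bigl|\upsilon^T P_k\upsilon^k\bigr|^{\frac{1}{k}},
\end{equation*}
where the second equality uses the elementary facts that $(k+1)^{-1/(k+1)}\to 1$ and that passing from the exponent $1/(k+1)$ to $1/k$ does not change the $\limsup$ of a nonnegative sequence (since $k/(k+1)\to 1$). By the Cauchy--Schwarz inequality and $\|\upsilon\|=1$ we have $|\upsilon^T P_k\upsilon^k|\leq\|P_k\upsilon^k\|$, hence $\rho_\upsilon^{-1}\leq\limsup_k\|P_k\upsilon^k\|^{1/k}=(r^*_\upsilon)^{-1}$, i.e. $\rho_\upsilon\geq r^*_\upsilon$. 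Thus the scalar series converges for every $0\leq r<r^*_\upsilon$; in fact it converges \emph{absolutely} there, since the same estimates applied to the majorant series $\sum_k\frac{r^{k+1}}{k+1}\|P_k\upsilon^k\|$ give root-test value $r/r^*_\upsilon<1$.

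Finally I would reassemble: the preceding step shows that at every $x$ in the stated set the series $\sum_{k\geq 1}\frac{x^T P_k x^k}{k+1}$ converges absolutely, so it defines a function there, and near the origin (where $V^*$ is analytic by Theorem~\ref{theorem1}, with Taylor coefficients exactly the $P_k$ produced by Algorithm~\ref{algorithm.modified} per Theorem~\ref{thm:sum}) this sum coincides with $V^*$ and elsewhere furnishes its analytic continuation. The point demanding the most care --- and the only real obstacle --- is the mismatch between the scalar coefficient $\upsilon^T P_k\upsilon^k$ that actually appears in the one-dimensional power series and the norm $\|P_k\upsilon^k\|$ used to define $r^*_\upsilon$: Cauchy--Schwarz delivers precisely the inequality $\rho_\upsilon\geq r^*_\upsilon$ needed for the stated convergence conclusion, so $r^*_\upsilon$ is a valid (conservative) directional radius; establishing that $r^*_\upsilon$ is the \emph{exact} directional radius, and hence that the stated set is the ROC in the strict sense of a maximal open domain of analyticity rather than merely a star-shaped subset of it, would additionally require the reverse inequality together with an openness argument, which I would treat as a separate issue.
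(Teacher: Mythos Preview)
Your reduction to the ray $x=r\upsilon$ and the root-test estimate are correct and give the inclusion $\Re^*\subset{\rm ROC}(V^*)$, where $\Re^*$ denotes the star-shaped set in the statement. The paper proves the same inclusion, though by a slightly different route (it first argues pointwise absolute convergence of the \emph{gradient} series on $\Re^*$, then upgrades to uniform absolute convergence of the $V^*$ series on compact subsets via a Cauchy estimate). For this half your argument is a clean alternative.

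The genuine gap you flagged---the reverse inequality and hence the identification ${\rm ROC}(V^*)=\Re^*$---is \emph{not} left as a separate issue in the paper; it is proved, and the idea you are missing is simple: work with the gradient $V_x^*$ rather than with $V^*$ itself. Restricting $V_x^*$ to the ray gives the vector-valued series
\[
V_x^*(z\upsilon)=\sum_{k\geq 1} (P_k\upsilon^k)\,z^k,
\]
whose $k$-th coefficient is precisely the vector $P_k\upsilon^k$. Applying the one-variable Cauchy--Hadamard formula \emph{componentwise} yields that the radius of convergence of the $i$-th component equals $\bigl(\limsup_k |(P_k\upsilon^k)_i|^{1/k}\bigr)^{-1}$; taking the minimum over $i$ and using norm equivalence in $\mathbb{R}^n$ gives exactly $r^*_\upsilon$. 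Thus $r^*_\upsilon$ is the \emph{exact} directional radius, not merely a lower bound: for $r>r^*_\upsilon$ at least one component of the gradient series diverges, so $V^*$ cannot be analytic at $r\upsilon$, and $r\upsilon\notin{\rm ROC}(V^*)$. This is precisely the step that your scalar coefficient $\upsilon^TP_k\upsilon^k$ cannot deliver, because Cauchy--Schwarz goes only one way.
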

\begin{proof}
By Theorem \ref{theorem1}, the value function $V^*\in C^{\omega}$ subject to the analyticity requirement of Algorithm \ref{algorithm.modified} (i.e., $f,g,Q$, and $R \in C^{\omega}$). Let $x$ be a nonzero vector in ROC$(V^*)$. The restriction $V_x^*|_\upsilon(z)=V_x^*(z\upsilon)$ of $V_x^*$ to $x=z\upsilon$ for $\upsilon \in \mathbb{R}^n$, $\|\upsilon\|=1$, is also an analytic function of a single complex variable $z$ and has a convergent Taylor series on $\mathbb{D}_r=\{z\in\mathbb{C}: |z|<r\}$ for some $r>0$. Applying the radius of convergence formula to each component of $V_x^*|_\upsilon(z)$ yields $r\leq r^*_{\upsilon,\infty}$, where $r^*_{\upsilon,\infty}=(\lim_{k \rightarrow \infty} \sup \|P_k\upsilon\|_{\infty}^{1/k})^{-1}$. From $\|P\upsilon^k\|\leq \sqrt{n} \|P\upsilon^k\|_{\infty}$, it follows that $r^*_{\upsilon,\infty}\leq r_\upsilon^*$ so that $r \leq r_\upsilon^*$. Thus, ${\rm ROC} (V^*) \subset\Re^*$ where $\Re^*$ is the right side of ROC$(V^*)$ in the statement of the Theorem. 

To prove $\Re^* \subset$ ROC$(V^*)$, it is sufficient to show that $V^*(x)=\sum_{k\geq 1} \frac{1}{k+1} x^T P_k x^k$ converges absolutely and uniformly on compact subsets $\sigma \bar{\Re}^*$ of $\Re^*$ for $\sigma \in (0,1)$. The gradient $V_x^*(x)=\sum_{k\geq 1} P_k x^k$ of $V^*$ is absolutely convergent for each $x\in \Re^*$ since $r^*_\upsilon \leq r_{\rm con}(V_{x_i}^*(z\upsilon))$, $i=1,..,n$, where $V_{x_i}^*=\partial V^*/\partial x_i$. This proves that $V^*(x)$ is at least continuous at each $x \in  \Re^*$ hence uniformly bounded on ${\bar{\sigma}\bar{\Re}^*}$ for $\sigma< \bar{\sigma} < 1$. Thus $\sup_{x\in \bar{\sigma} \bar{\Re}^*} |V^*(x)| \leq \beta$ for some finite $\beta>0$. Furthermore, by the Cauchy's estimate formula \citep[Theorem 1.3.3]{yoshida1989quadratic,scheidemann2005introduction}, the coefficient of the $(k+1)$-th order  of the Taylor series expansion of $V^*(z\upsilon)$, $|z|<{\bar{r}_\upsilon}$, satisfies $\frac{1}{k+1}|\upsilon^TP_k\upsilon^k| \leq \beta \bar{r}_\upsilon^{-k-1}$, $\forall \upsilon \in \mathbb{R}^n$, $\|\upsilon\|=1$, where $\bar{r}_\upsilon=\bar{\sigma} r^*_\upsilon$. Letting $r_\upsilon :=\sigma r^*_\upsilon < \bar{r}_\upsilon$ and evaluating the sum of the absolute values of the Taylor series for $V^*$, we have
\begin{align*}
    \sup_{x\in {\sigma \bar{\Re}^*}} \sum_{k\geq 1} \frac{|x^T P_k x^k|}{k+1} & = \sup_{\|\upsilon\|=1}\sum_{k\geq 1} \sup_{0 \leq r \leq r_\upsilon} \frac{|\upsilon^T P_k \upsilon^k|}{k+1}r^{k+1} \\
    & \leq \beta \sum_{k\geq 2}\Big(\frac{r_\upsilon}{\bar{r}_\upsilon}\Big)^k \leq \frac{\sigma}{\bar{\sigma}}\frac{\beta\sigma}{\bar{\sigma}-\sigma} < \infty
\end{align*}
which proves the uniform and absolute convergence for the Taylor series of $V^*$ and completes the proof.
\end{proof}
Theorem \ref{theorem.conv} can be used to construct an accurate estimate of the ROC of the value function $V^*$, and consequently, the generated optimal controller $u^*$. It is also possible to find an $n$-sphere approximation of the ROC$(V^*)$. From Theorem \ref{theorem.conv}, it can be see that
$r^*:=\Big(\lim_{k \rightarrow \infty}\sup \|P_k\|^{\frac{1}{k}}\Big)^{-1}$ serves as a lower bound for the radius of convergence of $V^*$. Both ROC$(V^*)$ and $r^*$ will be numerically estimated in the next section.
\section{Algorithm Implementation Examples and Simulation Studies} \label{sec_Algorithm Implementation Examples and Simulation Studies}

In this section, we implement the proposed algorithm on control affine nonlinear systems. A MATLAB routine that takes a symbolic system’s dynamics vector $f(x)$, a symbolic input matrix $g(x)$, a penalizing positive definite function $Q(x)$, an input penalizing positive definite function $R(u)$, a desired expansion order of the system's dynamics, and a desired order of approximation of the optimal control was developed by the authors applying the described algorithms in Section \ref{sec_Infinite Horizon Nonlinear Regulator for Control Affine Systems}. The MATLAB routine efficiently computes the solution matrices of the Taylor expanded value function, named $P_k$'s in the above algorithm, up to the prescribed order of approximation (available upon request). Using the MATLAB routine, we are able to compute very high truncation orders very fast (e.g. the $300^{th}$ order value function for second order systems). Three examples with different systems natures and nonlinearities are presented including an input constrained problem. 

\subsection{Third Order Multi-input System}
In this example, using the proposed HJB approach based nonlinear regulator, we construct the value function and the optimal feedback control law up to  different truncation orders. 
The problem is to solve the optimal control problem given by
\begin{equation} \label{Ex1_cost}
V= \int_{0}^{\infty} (50(x_1^2+x_2^2+x_3^2)+x_1^4+x_2^4 + x_3^4 +  \frac{1}{2} (u_1^2 + u_2^2)) dt
\end{equation}
governed by the dynamics,
\begin{align}\label{Ex1}
\begin{split}
&\dot{x_1}=3 \text{sin}(x_2) \\
&\dot{x_2}= 2 x_1^3+ x_3 + u_1 \\ 
&\dot{x_3}= 3 e^{x_1} - u_2
\end{split}
\end{align}
For this example, the desired order of the regulator is selected to be $30$  so the order of the value function is $31$. Then, $f(x)$, $g(x)$ and $Q(x)$ are expanded to get $F_k$'s, $G_k$'s and $Q_k$'s for $k=1,\dots, 30$. The Taylor series matrix coefficients of the optimal solution, $P_k$'s, were computed in $0.6643$ seconds using a laptop with an Intel(R) Core(TM) i7-8550U CPU at 1.8 GHz and 16GB RAM running on Windows 10. For $\bar{k}=10$ or lower, the built routine is able to instantly produce the matrix coefficients (i.e., in zero MATLAB CPU time). Efficiently computing high powers $P_k$'s helps in estimating the ROC, Fig. \ref{ROC_fine}, by shooting unit vectors in all directions and computing the radius of convergence at each direction as discussed in Section \ref{sec_Convergence Analysis and Radius Estimations}. Now, let us examine the initial condition $x(0)^T= \begin{bmatrix} x_{1}(0) && x_{2}(0) && x_{3}(0) \end{bmatrix} = \begin{bmatrix} -2 && -1.5 && 0\end{bmatrix}$, which is inside the ROC. The closed loop system's response and the control action are shown in Fig. \ref{FirstExample_states} and Fig. \ref{FirstExample_inputs}. 

    \begin{figure}[htb]
        \centering
    \framebox{\parbox{3.2in}  {\includegraphics[width=3.3in]{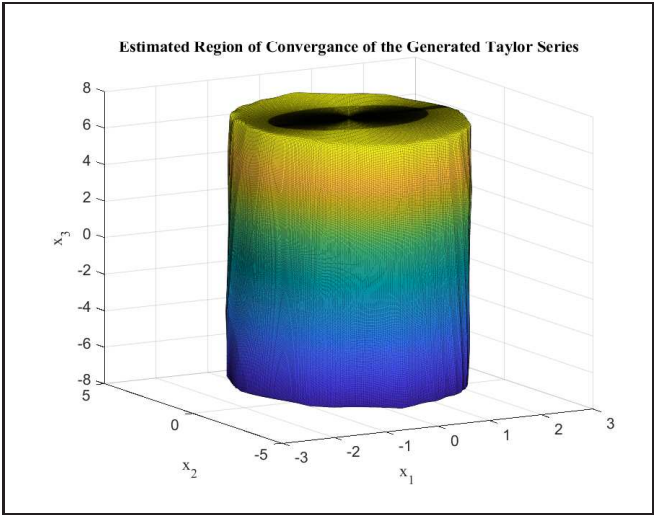}}}
      \caption{Estimated ROC for the Taylor series of the optimal solution of \eqref{Ex1_cost}-\eqref{Ex1}}
      \label{ROC_fine}
   \end{figure}

\begin{figure}[htb] 
\centering
  \begin{tabular}{|c|c|}    
  \hline
    \includegraphics[width=3.3in]{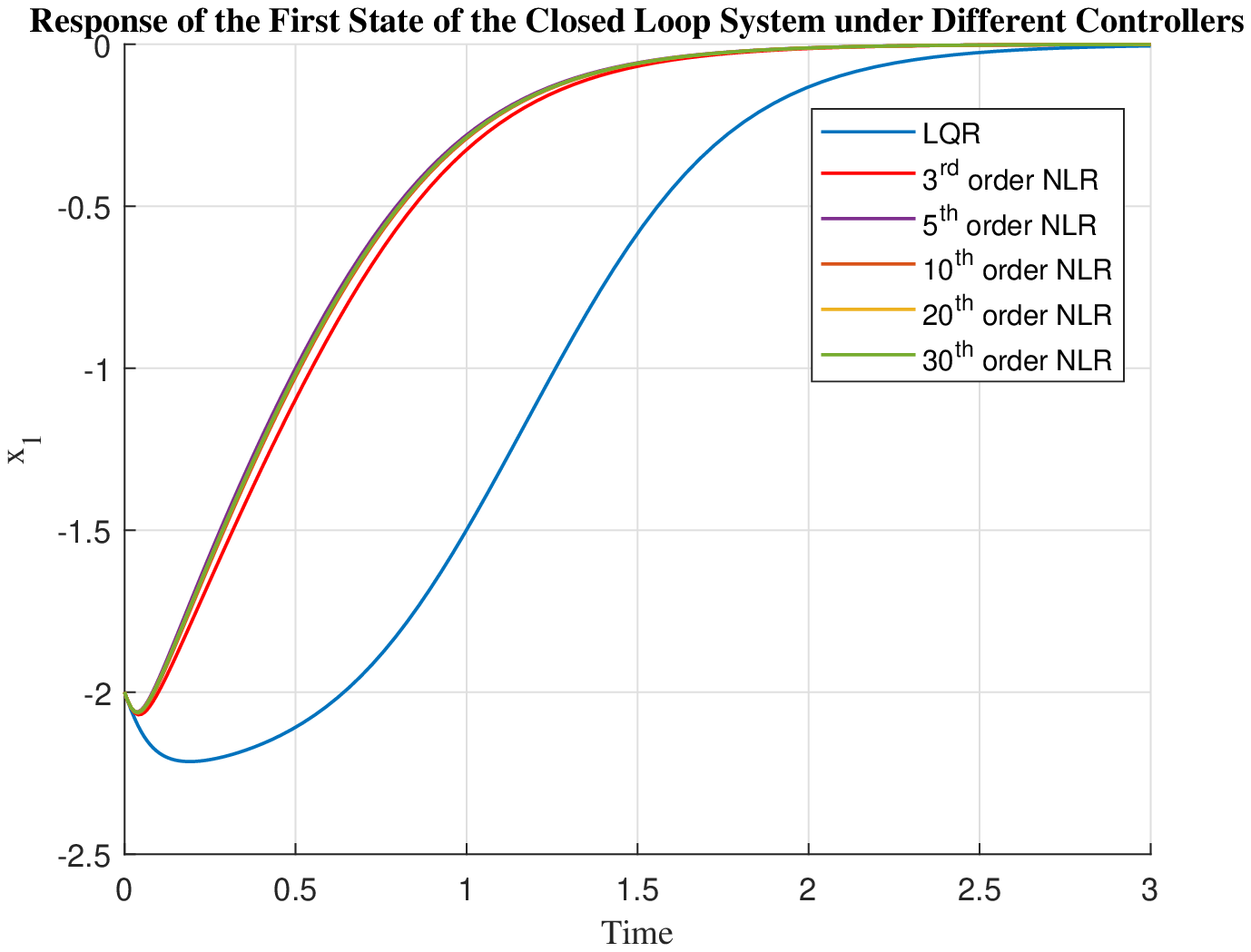} \\ 
    \includegraphics[width=3.3in]{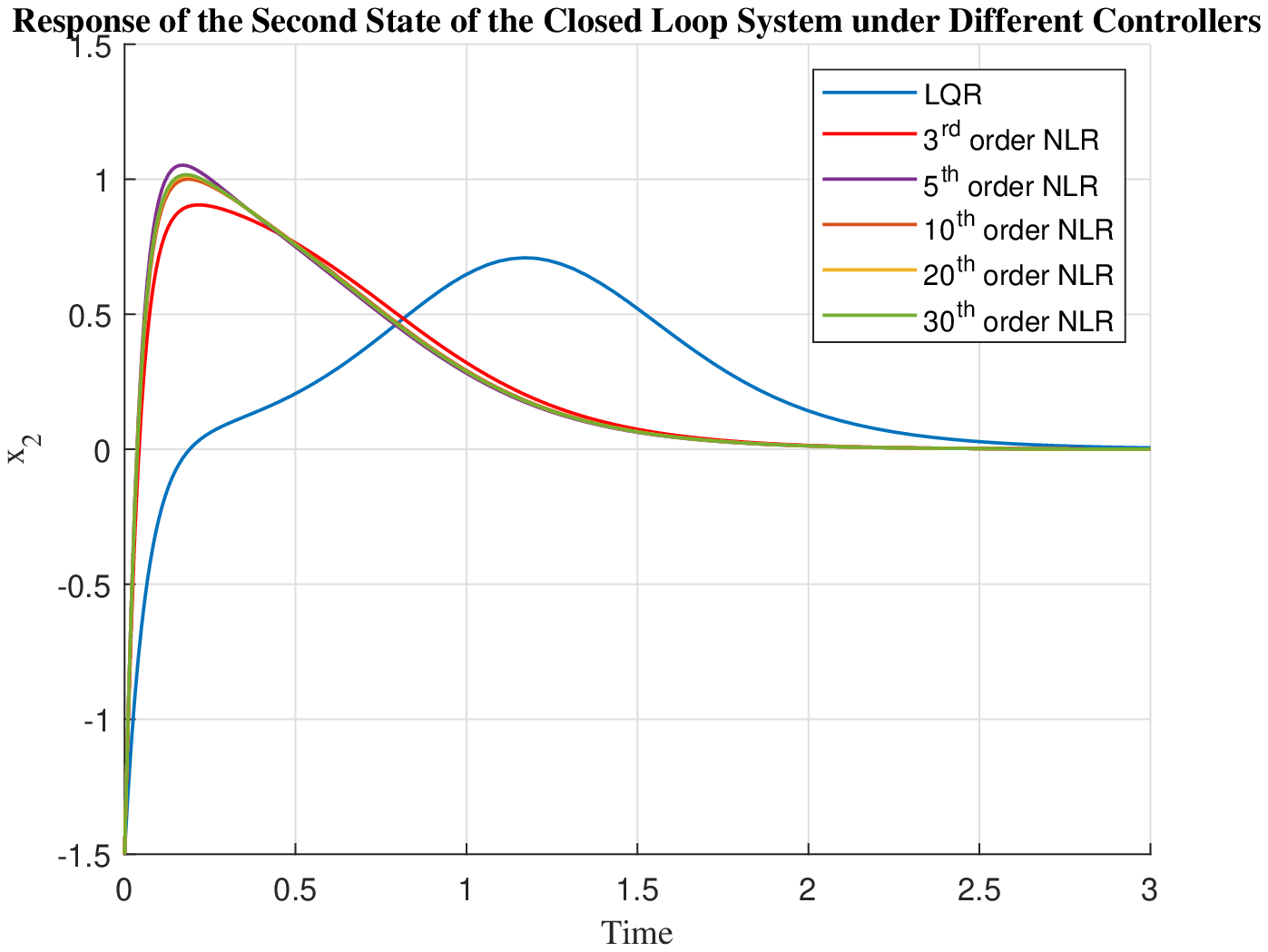}  \\
    \includegraphics[width=3.3in]{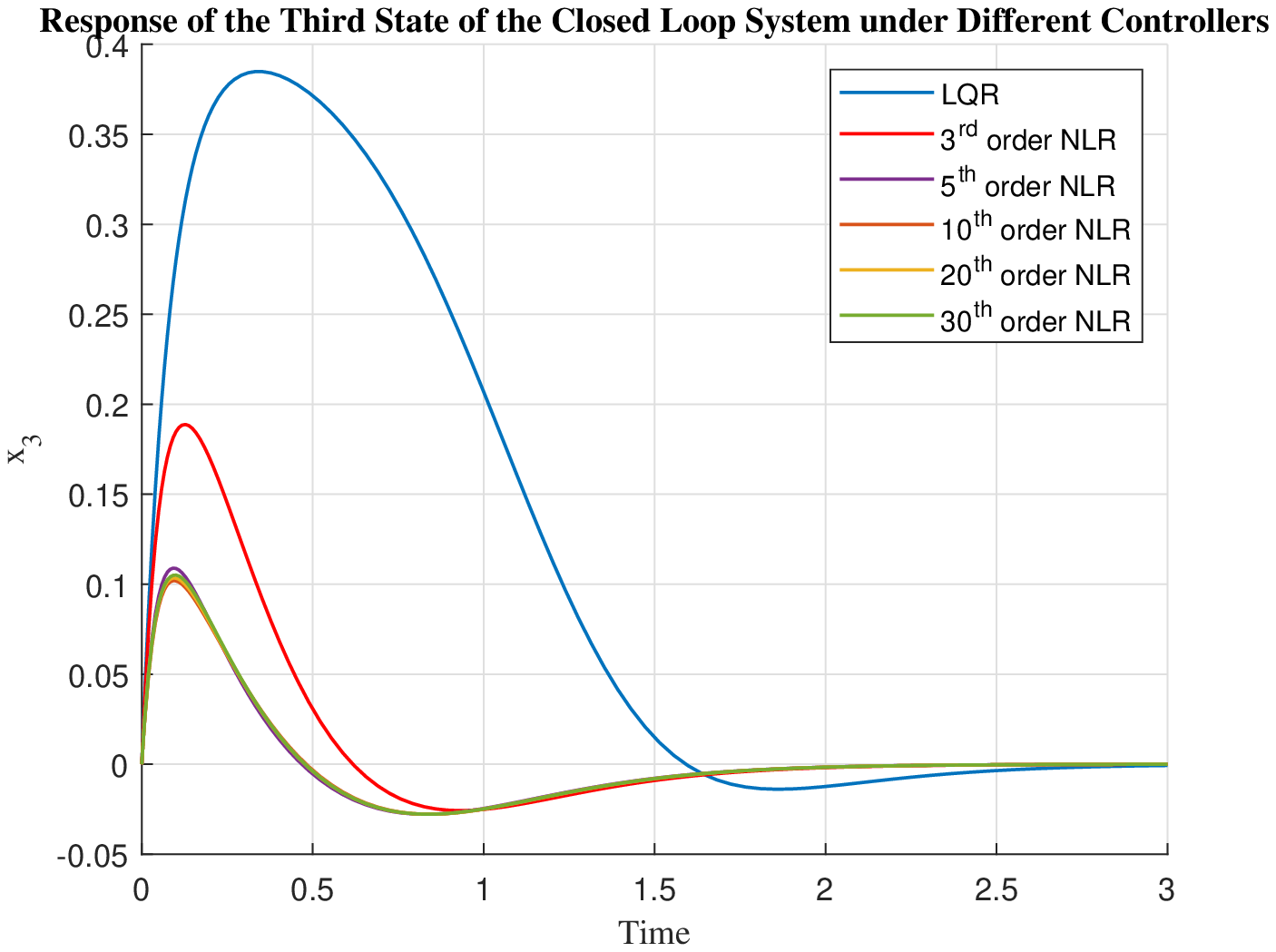}  \\
    \hline
  \end{tabular} 
   \caption[]{Comparing the closed loop response of the nonlinear system under the control of the LQR and different orders of the NLR with an initial condition of $x(0)^T= \begin{bmatrix}-2&&-1.5&&0\end{bmatrix}$.} 
   \label{FirstExample_states}
\end{figure}

  
   
   
   \begin{figure}[htb] 
\centering
  \begin{tabular}{|c|c|}    
  \hline
    \includegraphics[width=3.3in]{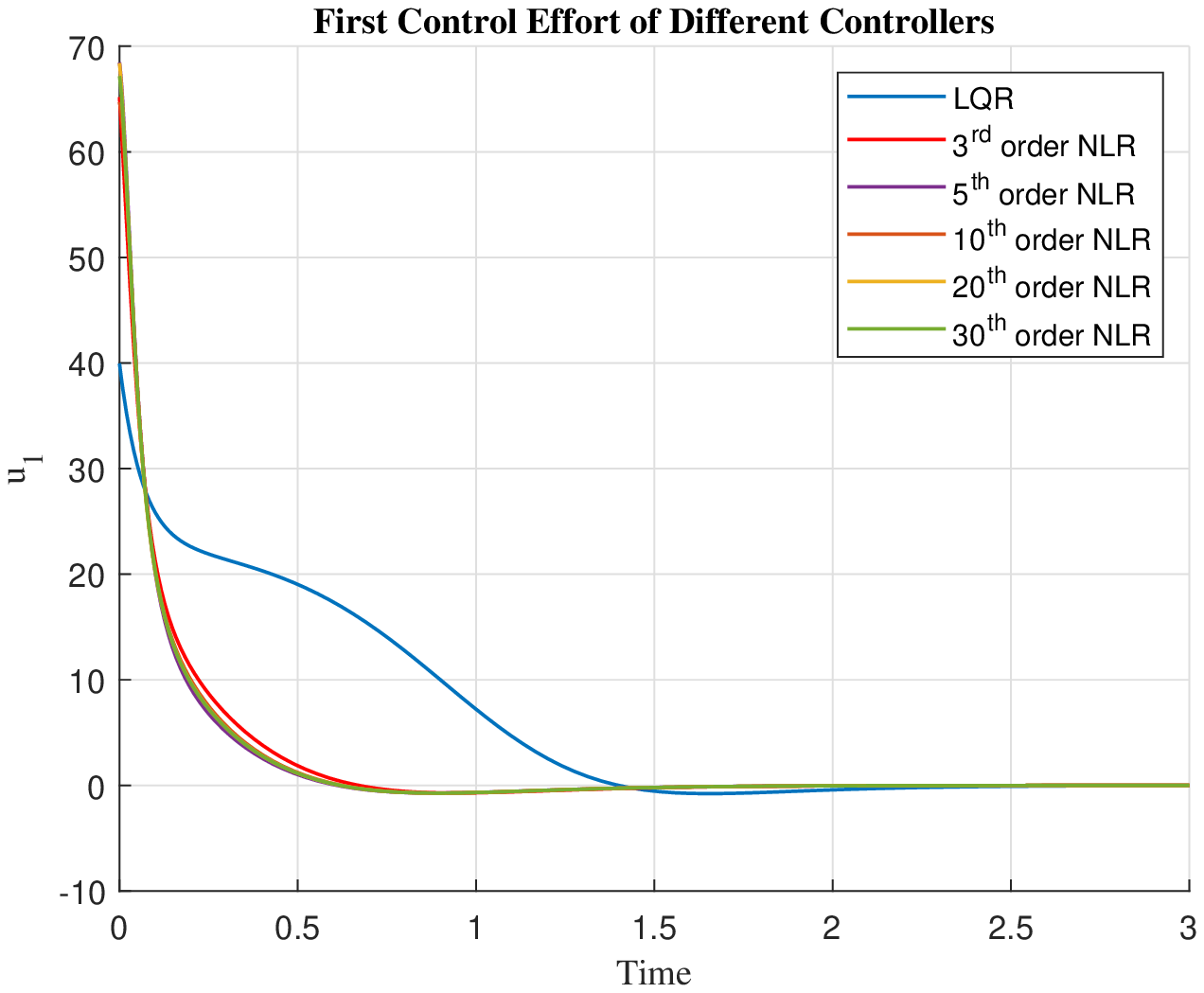} \\ 
    \includegraphics[width=3.3in]{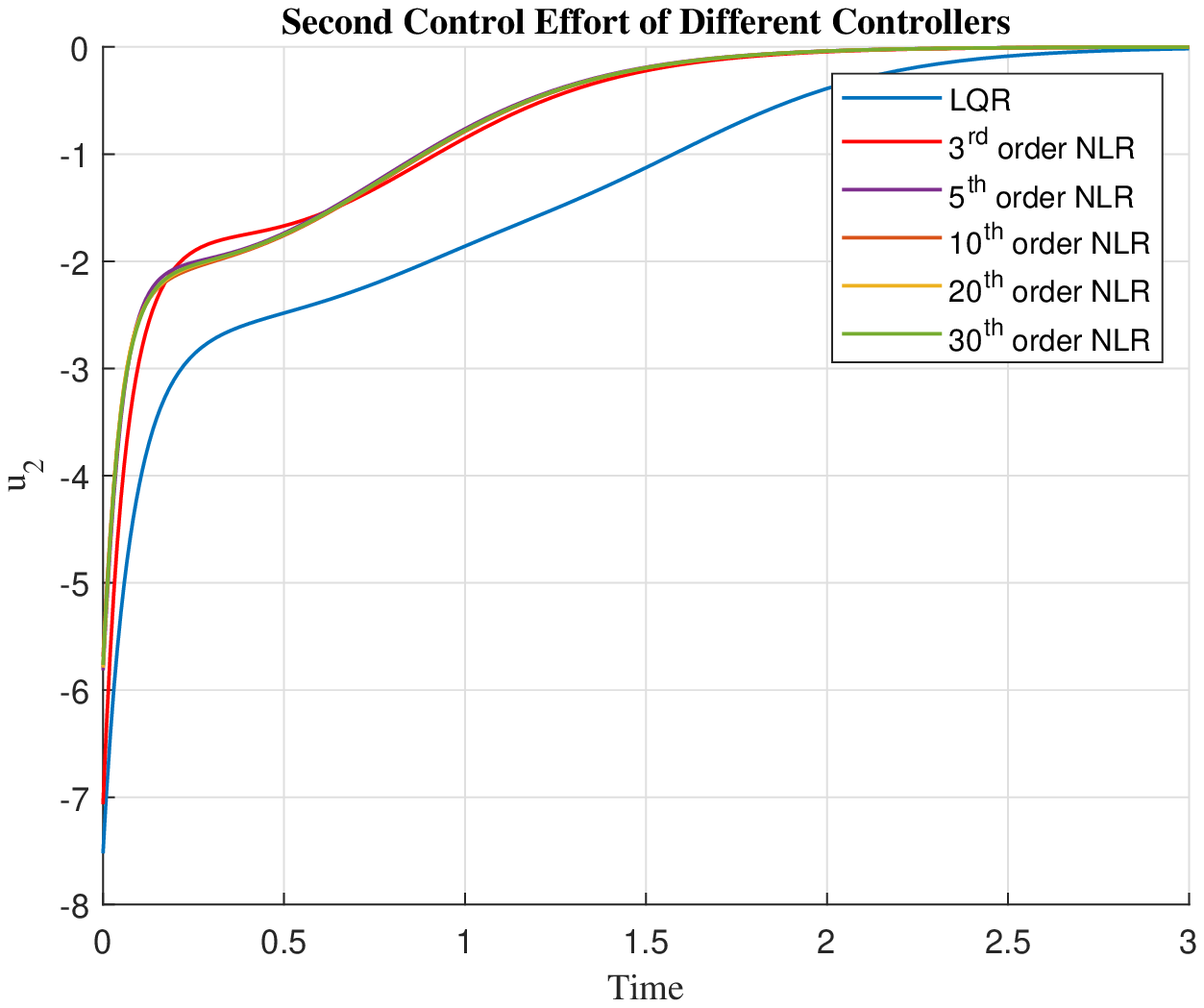}  \\
    \hline
  \end{tabular} 
   \caption[]{First and second control actions of the LQR and different orders of NLR to stabilize the system \eqref{Ex1} with an initial condition of $x(0)^T= \begin{bmatrix}-2&&-1.5&&0\end{bmatrix}$.}
   \label{FirstExample_inputs}
\end{figure}
Clearly, higher NLR's provide better performances. It can be seen, however, that the improvement in the performance after the $5^{th}$ NLR is not notable, which can be seen as a sign of convergence to the optimal controller. We tested more initial conditions and there was no significant improvement in the performance after the $7^{th}$ NLR in most cases. Moreover, if we slightly increase this initial condition, the LQR, is not capable of stabilizing the system unlike the NLR's and thus the NLR's result in larger region of attraction. It is worth mentioning, however, that if we start outside the ROC, the nonlinear controllers, consume higher control power in the beginning to stabilize the system in a faster pace which may not be always realizable. Nonetheless, one could impose some inputs constraints and the NLR's are capable of regulating the system in many cases, as we will show for the next example. Additionally, we must note that we have tested more initial conditions and it was concluded that if we are outside the region of convergence of the Taylor series, very high order regulators are not guaranteed to provide better performance neither stability, as expected. Thus, the control designer may choose relatively lower order NLR's, e.g. $5^{\text{th}}$ to the $10^{\text{th}}$, for this problem as they provide a greater performance with reasonable actuation. 
\subsection{F-8 Flight Control System} 
This system is taken from \cite{garrard1977design} which they applied their HJB equation based control. \cite{beeler2000feedback} used the control-affine, constant-input-matrix version of this example to compare handful of feedback control methodologies. Garrad and Jordan's controller, called the two-term Taylor expansion method in \cite{beeler2000feedback}, was compared with other controllers including a state-dependent riccati equation (SDRE) controller. \cite{AlmubarakSadeghandTaylor2019} used that version too to implement their NLQR.

This F-8 flight control system is given by 
$$f(x,u)=\begin{bmatrix}
-0.877x_1+x_3+0.47x_1^2-0.088x_1x_3 \\ -0.019x_2^2 +3.846x_1^3-x_1^2x_3 -0.215u\\+0.28 x_1^2 u+0.47x_1 u^2+0.63u^3 \\
x_3 
\\
-4.208x_1-0.396x_3-0.47x_1^2-3.564x_1^3 \\+20.967 u +6.265x_1^2 u+46 x_1 u^2 +61.4u^3 
\end{bmatrix}_{3 \times 1}$$
where $x_1$ is the angle of attack deviation (rad) from the trim value of $0.044$, $x_2$ is the flight path angle (rad), $x_3$ is the rate of change in the flight path angle (rad/sec) and $u$, the input, is the tail deflection deviation (rad) from the trim value of $-0.009$. The cost functional, as in \cite{garrard1977design}, is chosen to be
$$ V=\frac{1}{2} \int_{0}^{\infty} (x^T 0.25 I_{3 \times 3} x + u^2) dt$$
To use our proposed algorithm, we first need to put the system of the form of \eqref{gen.sys}, which is a control affine dynamical system. 
We want to develop a nonlinear feedback control that is able to regulate the angle of attack optimally. In \cite{beeler2000feedback}, Beeler, Tran and Banks used an initial condition of $x(0)=\begin{bmatrix}0.4363&&0&&0\end{bmatrix}^T$, i.e. an angle of attack of $25^o$, to compare the different feedback methodologies. It is reported that Garrard's algorithm is simple and very effective but only for systems with low nonlinearites since it is not feasible to get higher approximation orders of the optimal control. However, the SDRE and Garrard's controllers were outweighed by the other methods. Almubarak, Sadegh and Taylor in \cite{AlmubarakSadeghandTaylor2019} obtained higher orders of approximation of the optimal control. Nevertheless, the developed controllers in these papers used the approximated system, with a constant input matrix, as we mentioned. Here we use the general control affine system, i.e. with a state dependent input matrix $g(x)$, which surely gives a better solution and more accurate approximation to the optimal solution. Using our proposed algorithm, we got up to the $30^{th}$ order of approximation more accurately. The estimated ROC for this problem is shown in Fig. \ref{ROC_flight}. The estimate of $r_{\rm con}(V^*)$ is equal to $r^*=0.52$, which fairly well agrees with the ROC range for $x_1$. We provide here the $5^{th}$ order controller, after removing very small and zero terms without affecting the performance, which is an improved version of the NLQR provided in \cite{AlmubarakSadeghandTaylor2019}:
\begin{align*}
u=&-0.053x_1+0.5x_2+0.521x_3+ 0.035x_1^2  -0.045x_1 x_2 \\ &
+0.339x_1^3-0.531x_1^2 x_2+0.017x_1^2 x_3+0.139 x_1 x_2^2 \\& 
-0.042x_1 x_2 x_3 +0.013 x_1 x_3^2 +0.504x_1^4-0.655x_1^3 x_2\\ &
+0.082 x_1^3 x_3+0.353x_1^2 x_2^2-0.081  x_1^2 x_2 x_3-0.087 x_1 x_2^3\\ &
+0.0327 x_1 x_2^2 x_3 + 2.29 x_1^5 -3.205x_1^4 x_2+0.499x_1^4 x_3\\ &
+2.104 x_1^3 x_2^2 -0.554 x_1^3 x_2 x_3 +0.043 x_1^3 x_2^2 -0.864  x_1^2 x_2^3 \\ &
+ 0.271  x_1^2 x_2^2 x_3 -0.038 x_1^2 x_2 x_3^2+0.155  x_1 x_2^4  \\ &
-0.087x_1 x_2^3 x_3+0.011 x_1 x_2^2 x_3^2+0.013 x_2^4 x_3
\end{align*}
As shown in Fig. \ref{FlightDiffNLQR25}, the performance gets improved as we use higher powers. Yet, clearly after the $10^{th}$ order approximation, the performance almost did not get enhanced. These results are an improvement of the results obtained in \cite{AlmubarakSadeghandTaylor2019} and is very close, if not better than, to the best results obtained by \cite{beeler2000feedback} through the interpolation of two-point boundary-value (TPBV) open-loop control.

          \begin{figure}[htb]
        \centering
    \framebox{\parbox{3.2in}  {\includegraphics[width=3.3in]{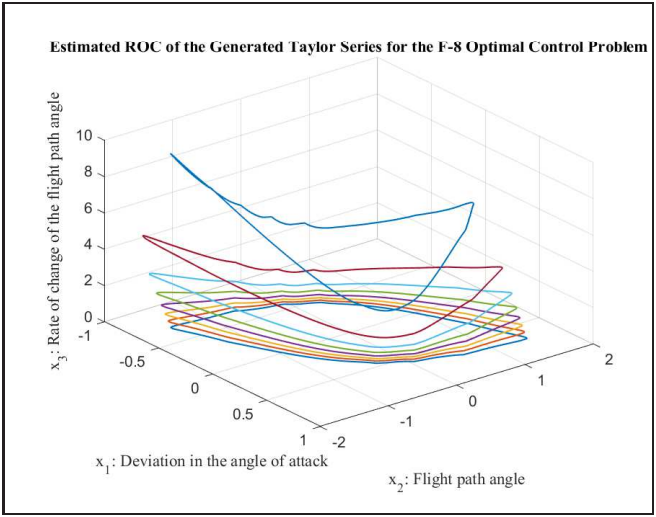}}}
      \caption{Estimated ROC boundary contours for constant inclination angle for the generated Taylor series}
      \label{ROC_flight}
   \end{figure}

\begin{figure}[htb] 
\centering
  \begin{tabular}{|c|c|}    
  \hline
    \includegraphics[width=3.3in]{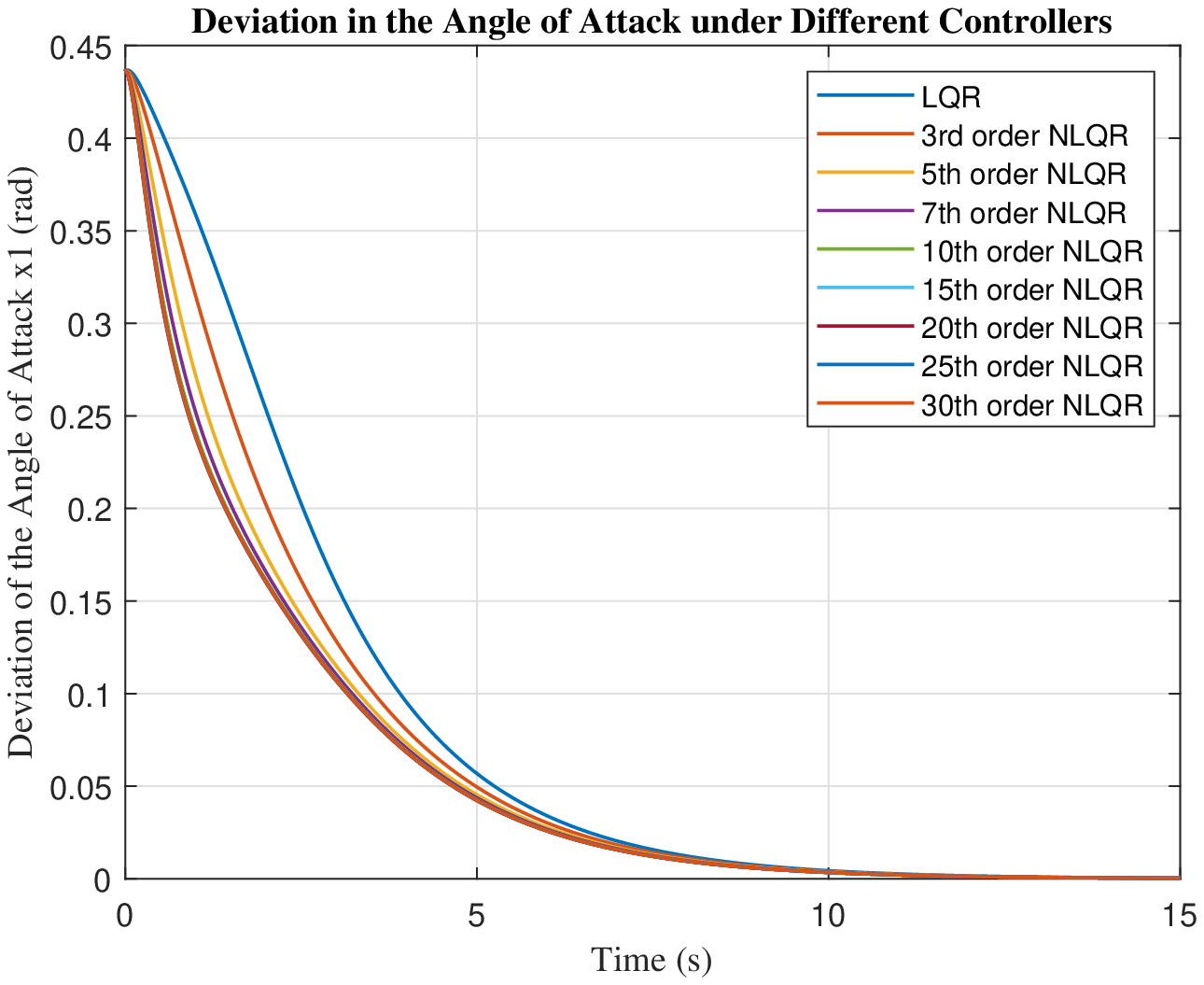} \\ 
    \includegraphics[width=3.3in]{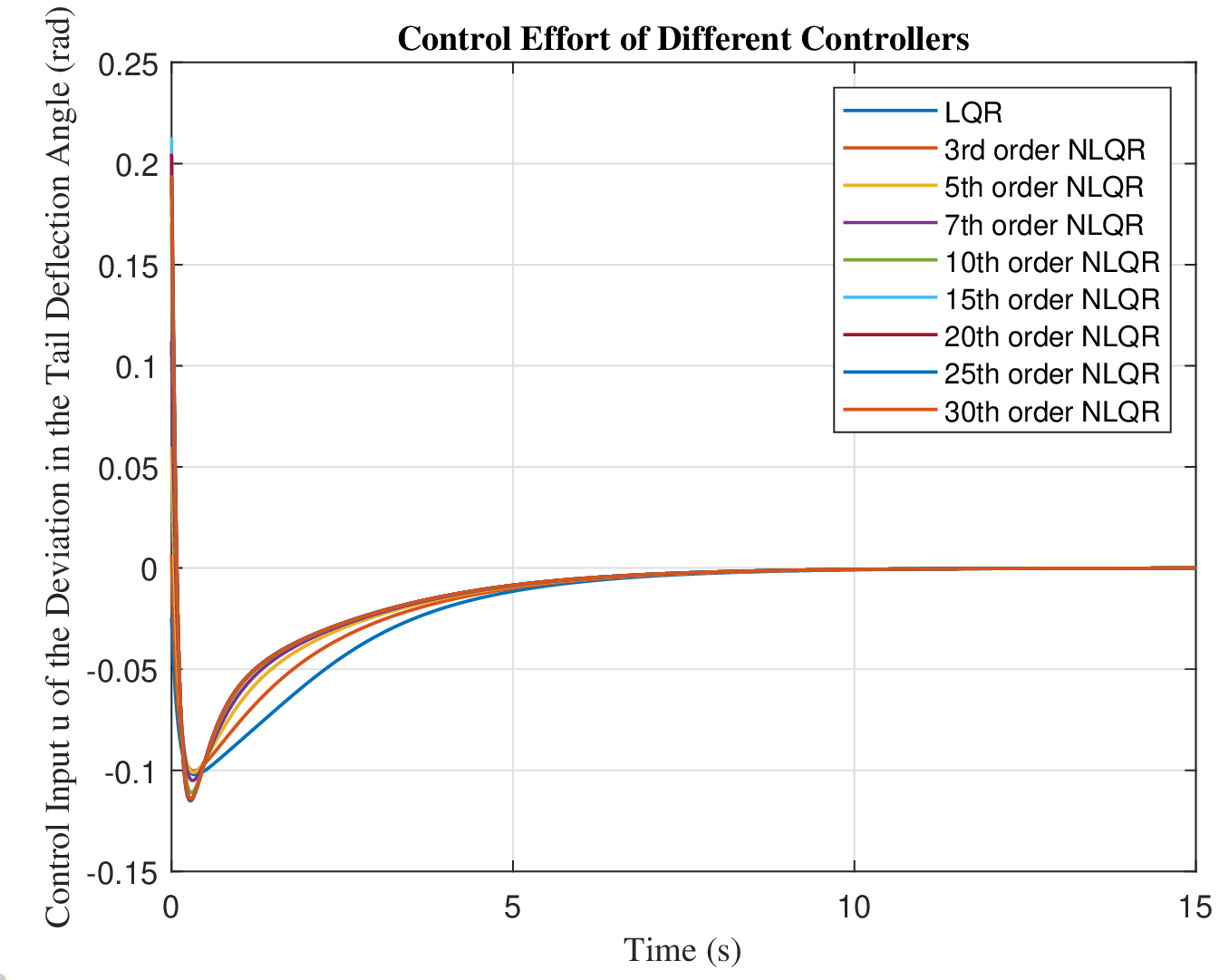}  \\
    \hline
  \end{tabular} 
   \caption{The angle of attack's deviation and the required deviation of the tail deflection, i.e. control action, of different Taylor expansions of the optimal control to regulate the angle of attack with an initial angle of \(25^o\).}
   \label{FlightDiffNLQR25}
\end{figure}
   
   
It is worth mentioning that for higher angles of attack, low order controllers, including the LQR obviously, are not capable of regulating the system. For a detailed discussion about the F-8 flight control system, the reader may refer to \cite{garrard1977design}. Fig. \ref{FlightDiffNLQR_30} shows the performance of higher order controllers, $6^{th}$, $7^{th}$ and $30^{th}$, when the initial condition is $x(0)=\begin{bmatrix}0.5236&&0&&0\end{bmatrix}^T$, which corresponds to an angle of attack of $30^o$ where low orders are not capable of regulating the system. Clearly, there is a significant improvement in the performance when using the  $30^{th}$ power control rather than the  $6^{th}$ power control. This initial condition, however, is outside the ROC and thus, as discussed before, using higher orders does not guarantee stability nor better performance. In fact, many of the higher order regulators could not stabilize the system. 
\begin{figure}[htb]
\centering
  \begin{tabular}{|c|c|}    
  \hline
    \includegraphics[width=3.3in]{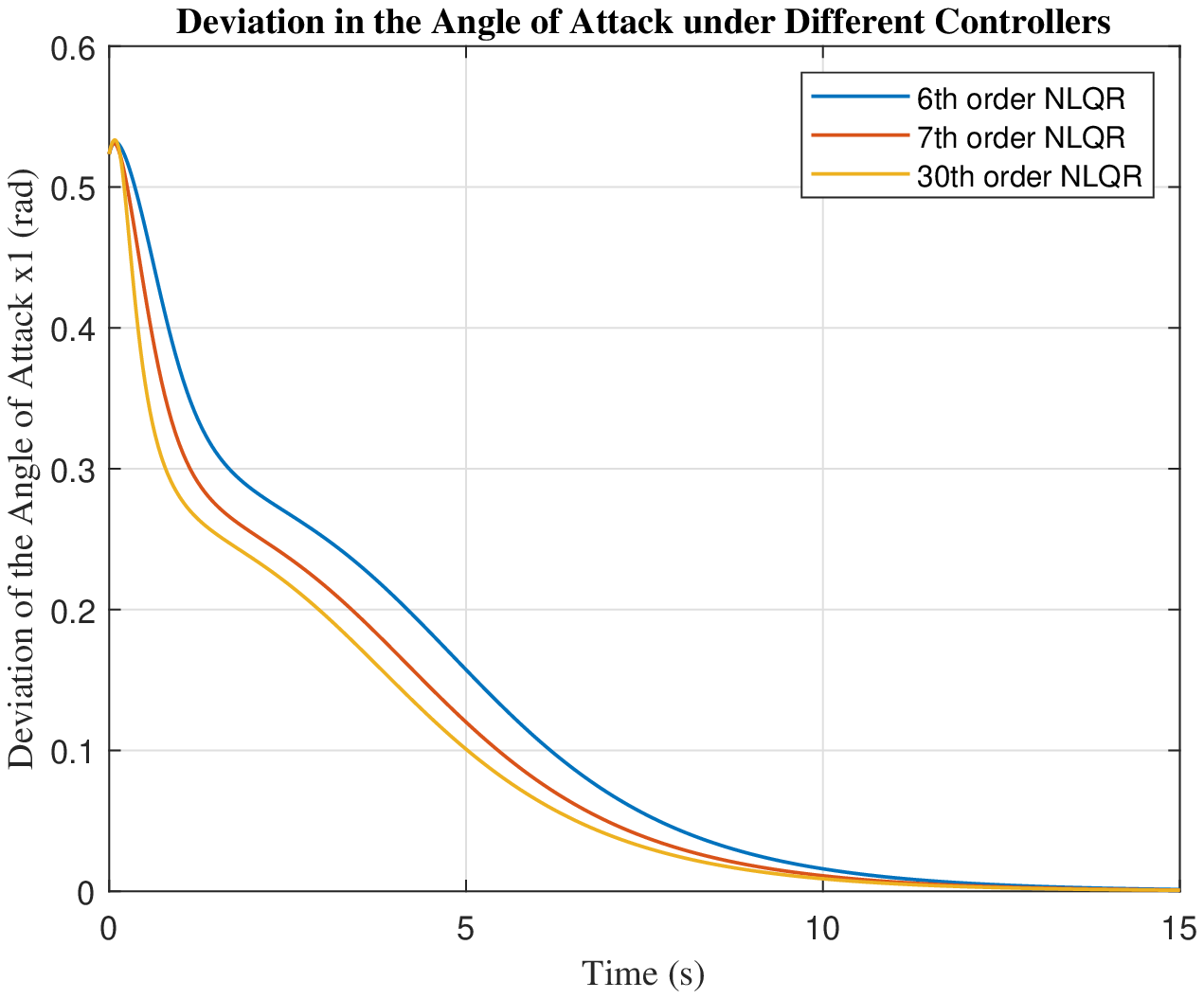} \\ 
    \includegraphics[width=3.3in]{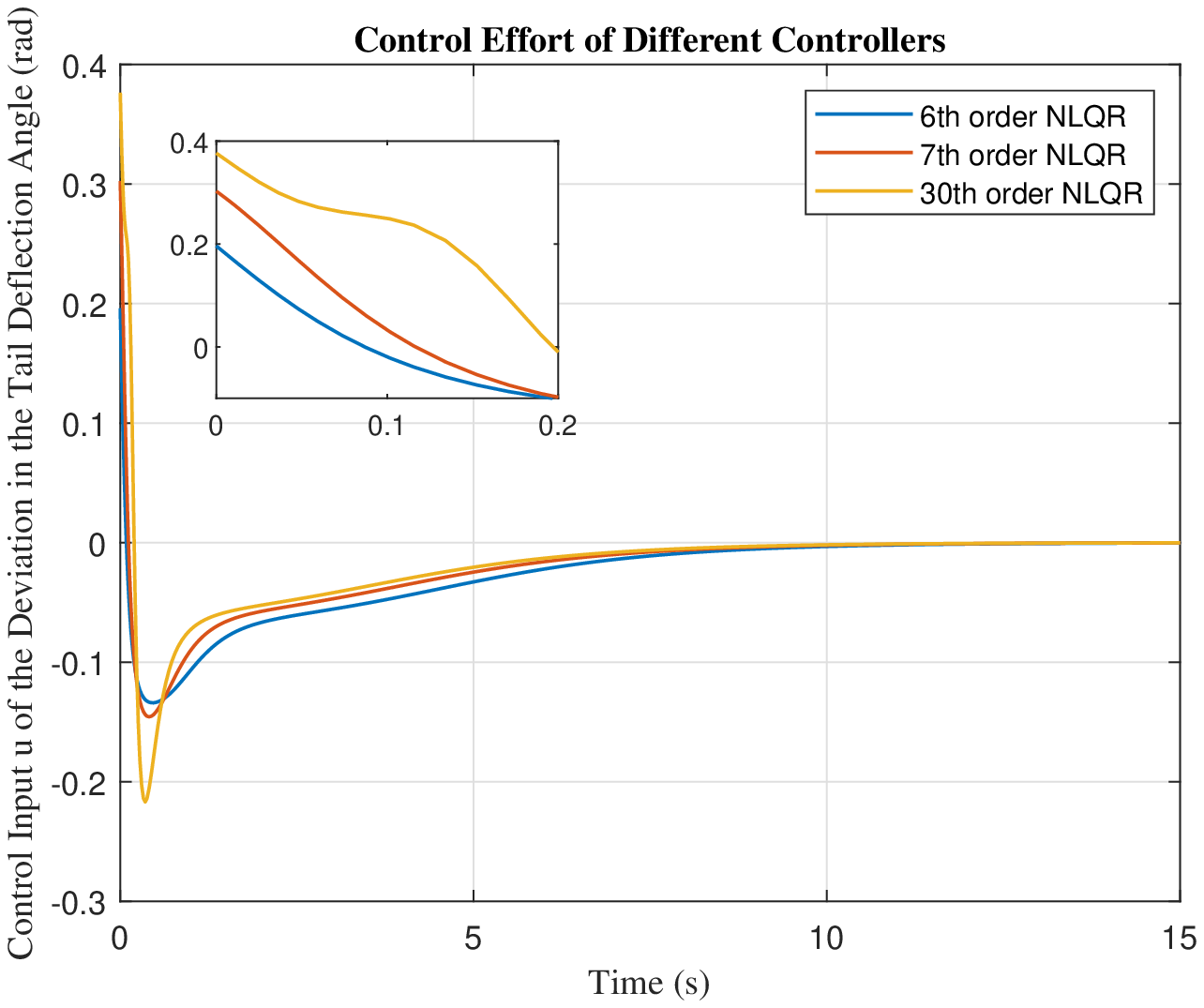}  \\
    \hline
  \end{tabular} 
   \caption{The angle of attack's deviation under the control of $6^{th}$, $7^{th}$ and $30^{th}$ Taylor expansion approximations of the optimal control with an initial angle of \(30^o\) where using low orders result in unstable closed loop system.}
    \label{FlightDiffNLQR_30}
\end{figure}

   

\subsection{Constrained F-8 Flight Control System} 
We impose input constraints, which is reflected in the cost integral as in \cite{abu2005nearly} where artificial neural networks were used to approximate the optimal solution. Fig. \ref{FlightDiffNLQR_30} shows that the nonlinear controller uses high input power to recover and stabilize the angle of attack. Testing this system with multiple nonlinear regulators showed that the NLR's usually can't recover if the deviation of the tail deflection required is very high. Thus, we restrict the agnle of deviation to $0.2$ rad, i.e. $ -11.46^o \leq u \leq 11.46^o $. Then, to confront this constraint, $\phi$ is chosen to be $\phi(v)= \tanh(5v)/5$, and thus $\Psi(v)=\ln(\cosh(5v))/25$ and consequently the cost functional will be
$$
V=\int_{0}^{\infty} \Big( \frac{1}{2} x^T 0.25 I_{3 \times 3} x + \frac{1}{5}\int_{0}^{u}\tanh^{-1}(5v) dv \Big) dt
$$
Therefore, the $\bar{k}^\text{th}$ NLR will be in the form $u_{\bar{k}}(x)=-\frac{1}{5}\tanh(5 g^T \sum_{k=1}^{\bar{k}} P_k x^k$). The results in Fig. \ref{FlightDiffNLQR_30_Sat} show a great deal in handling the constraints and a clear improvement in the NLR to recovery. Notice that when the constraints are lifted as in Fig. \ref{FlightDiffNLQR_30}, the $7^\text{th}$ order NLR needed to use a relatively high deviation in the tail to recover while after imposing the constraints, it was able to regulate the system using less control power. Moreover, some NLR's couldn't stabilize the system before adding the constraints but imposing input saturation helped in generating NLR's that are capable of handling the high angle of attack. The results here show how to untangle input saturation directly by incorporating a saturation function in the cost integral without the need of adding more states to enforce saturation indirectly as in \cite{AlmubarakSadeghandTaylor2019}.

\begin{figure}[htb] 
\centering
  \begin{tabular}{|c|c|}    
  \hline
    \includegraphics[width=3.3in]{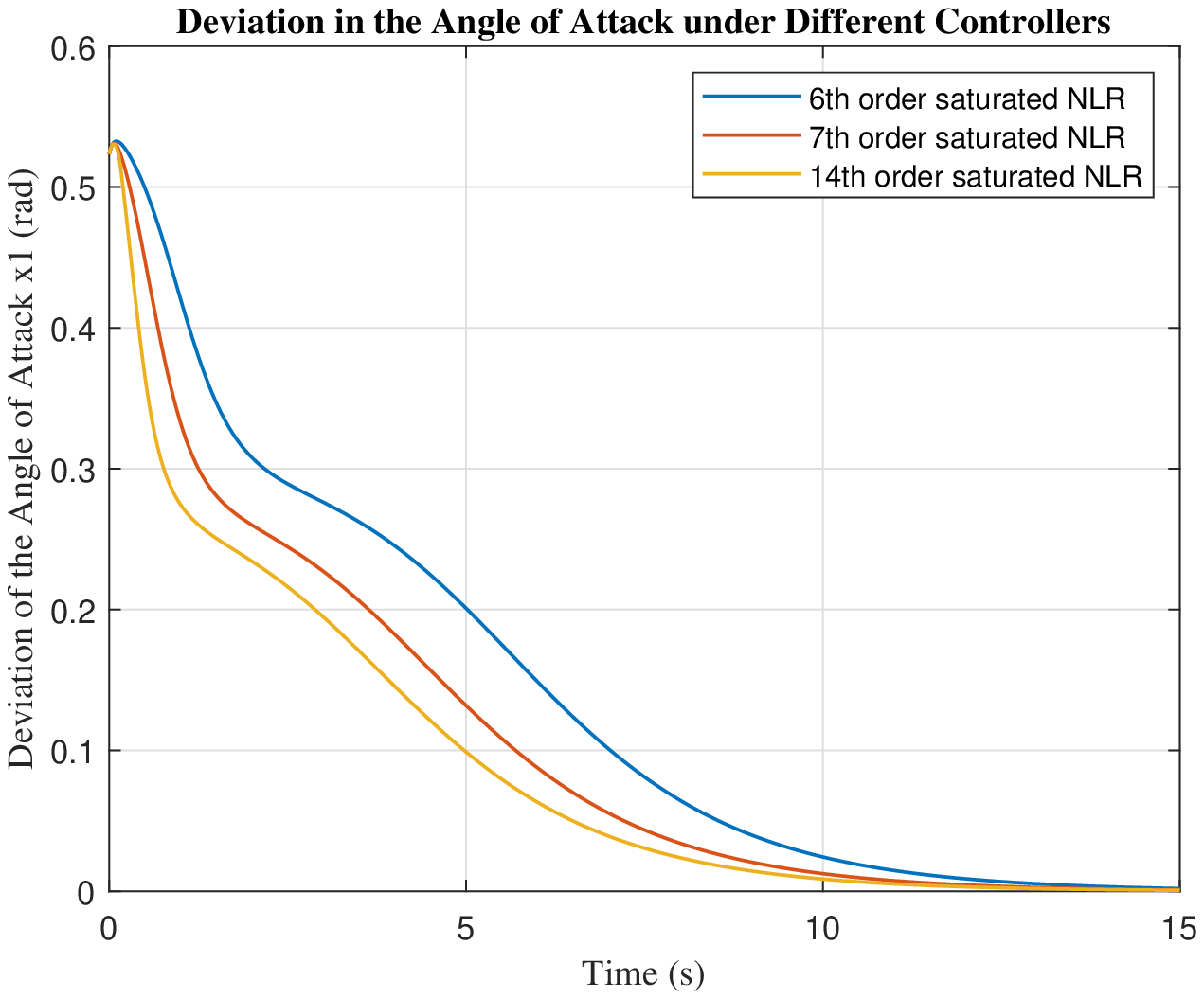} \\ 
    \includegraphics[width=3.3in]{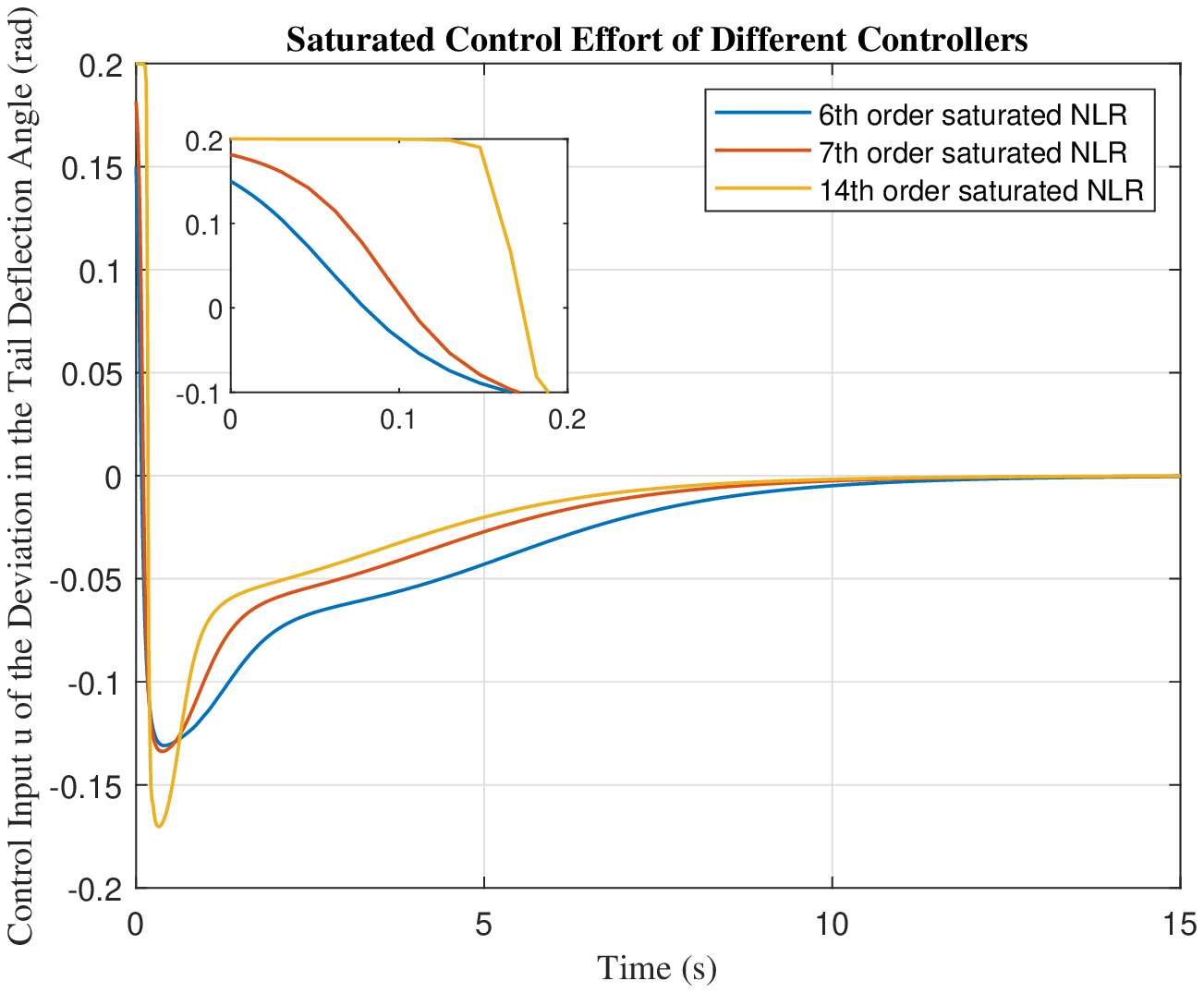}  \\
    \hline
  \end{tabular} 
   \caption{The angle of attack's deviation under the constrained control of the $6^{th}$, $7^{th}$ and $30^{th}$ NLR's with an initial angle of attack of \(30^o\) where using low orders result in unstable closed loop system.}
   \label{FlightDiffNLQR_30_Sat}
\end{figure}



\section{Conclusion} \label{sec_Conclusion}
The paper presented an efficient and novel algorithm to compute the value function and the associated optimal controller for control-affine nonlinear systems. The algorithm can also be utilized to accurately estimate the ROC of the generated optimal controller. More specifically, a general closed form solution for each matrix coefficient of the value function and the resulting optimal controller was provided. The methodology is based on efficiently expanding the HJB equation to construct a nonlinear matrix equation that can be untangled independently of the current states. As a result, the NLR can be obtained offline using a minimal polynomial basis function that includes all possible combinations of the states as a generalization of the linear case. It was demonstrated that the proposed methodology is capable of achieving asymptotic stability for nonlinear systems satisfying the required assumptions. Three examples of nonlinear systems were presented including one with input saturation where it was shown that using higher order controllers improves performance inside the region of convergence of the series and could provide larger region of attraction.

Future works may include performing the power series expansion of the value function around arbitrary states in order to possibly expand the region of convergence. Another improvement could be extending the current work to general nonlinear systems, i.e., not affine in control without adding more state variables. Furthermore, optimal estimation algorithms could be formulated similarly to the proposed method.


\appendixtitleon
\appendixtitletocon
\begin{appendices}  
    
    \section{Proof of Proposition \ref{psi and phi prop}} \label{psi and phi prop proof}
    \begin{proof}
We claim that $\Psi(v)= v^T \phi(v) - R \circ \phi (v)$. Differentiating both sides with respect to $v$ yields
\begin{equation}
\frac{\partial \Psi}{\partial v} = \phi(v)^T  + v^T \frac{\partial \phi}{\partial v}  - \frac{\partial R}{\partial u} \frac{\partial \phi}{\partial v}  =  \phi(v)^T
\end{equation}
using that ${\partial R (u)}/{\partial u}=\rho(u)^T=v^T$ evaluated at $u=\phi(v)=\rho^{-1}(v)$. The analyticity of $\Psi$ follows from the analyticity of $\phi$ being the inverse of an analytic function with positive definite (and invertible) Jacobian.
\end{proof}

    \section{Proof of Proposition \ref{vectors power prop}} \label{vectors power prop proof}
\begin{proof}
 $\langle x^k, y^k \rangle = (x^k)^T y^k= (x^k L_k )^T L_k y^k= (x^{\otimes k})^T y^{\otimes k} \\ = (x^T y)^k$, where $L_k$ is a unique linear mapping matrix such that $x^{\otimes k}=L_k x^k$. 
\end{proof}
    \section{Proof of Proposition \ref{prop:sym}}
    \label{Sym prop proof}
    \begin{proof}
    $i$) If \eqref{opt.grad_k} holds then the Jacobian of $P_k x^k$, being the Hessian of $V_k$, must be symmetric. To prove the converse suppose that the Jacobian of $P_k x^k$ is symmetric. Then
\begin{align*}
(k+1) \left(  \frac{\partial V_{k}}{\partial x}\right)  ^{T} &  =P_{k}x^{k}+\left(
P_{k}\frac{\partial x^{k}}{\partial x}\right)  ^{T}x= (k+1) P_k x^{k}%
\end{align*}
where the last equation follows from
$$
\frac{\partial x^{k}}{\partial x}x=\sum_{i=1}^{n}\frac{\partial x^{k}%
}{\partial x_{i}}x_{i}=kx^{k}%
$$
$ii$)By Kronecker product properties, the symmetry condition is equivalent to 
\begin{align*}
& \frac{\partial }{\partial x} x^T P_k x^k = (k+1) P_k x^k  \iff \\
& \text{vec}(P_k)^T \Big( \frac{\partial }{\partial x} ( x^k \otimes x) - (k+1) ( x^k \otimes I)  \Big) = 0 
\end{align*}
Denoting the matrix multiplying $\text{vec}(P_k)^T$ by $N(x)$, we have
$N(x) = (\frac{\partial}{\partial x} x^k \otimes x) - k(x^k \otimes I)$. We show that $( y^k \otimes y)^T N(x) =0, \; \forall x,y \in \mathbb{R}^n$:
\begin{gather*}
    \begin{split}
(y^k \otimes y)^T N(x) = & ( y^{kT} \otimes y^T) ( \frac{\partial}{ \partial x} x^k \otimes x) \\ & - k (y^{kT} \otimes y^T)(x^k \otimes I) \\ 
\Rightarrow (y^k \otimes y)^T N(x) & = \big( \frac{\partial}{\partial x} \langle y^k , x^k  \rangle \otimes y^T x \big) \\ & - k ( \langle y^k , x^k \rangle \otimes y^T) 		
    \end{split}
\end{gather*}
By Proposition \ref{vectors power prop}, $\langle y^k , x^k \rangle = (y^T x)^k$ and $	\frac{\partial}{\partial x} (y^T x)^k = k(y^T x)^{k-1} y^T$. It follows that $(y^k \otimes y)^T N(x)=0$ thus
proving the claim. Now,
\begin{gather*}
(y^k \otimes y)^T N(x) = y^{(k+1)T} K_k N(x) =0, \;\forall y \in \mathbb{R}^n \\
\Rightarrow \mathcal{R}(N(x)) \subseteq \mathcal{N}(K_k)
\end{gather*}
Now, if $\text{dim}\big( \oplus_{x \in \mathbb{R}^n} \mathcal{R}(N(x)) \big) < \text{dim}( \mathcal{N}(K_{k})$, then $\exists z\in \mathcal{N}(K_k)$ such that $z^T N(x) =0$,  $\forall x \in \mathbb{R}^n$. Next, let a matrix $Z$ be such that $\text{vec}(Z) = z$. Then, $z^T N(x)=0$ implies
\begin{align*}
\frac{z^T}{k+1} \big(\frac{\partial }{\partial x} ( x^k \otimes x) - (k+1) ( x^k \otimes I) \big) = \\
\frac{1}{k+1} \frac{\partial}{\partial x} z^T K_{ij}^T x^{k+1} - z^T (x^k \otimes I) =\\
 0 - z^T (x^k \otimes I) = -Z x^k = 0 \text{ } \forall x
\end{align*}  
But, this is a contradiction proving that $\oplus_{x \in \mathbb{R}^n} \mathcal{R}(N(x)) = \mathcal{N}(K_k)$. This together with $\text{vec}(P_k)^T N(x) =0$, $\forall x \in \mathbb{R}$ implies that
$\text{vec}(P_k) \in  \oplus_{x \in \mathbb{R}^n} \mathcal{R}(N(x))^\perp = \mathcal{N}(K_k)^\perp = \mathcal{R}(K_k^T)$ if and only if the symmetry condition holds. 
\end{proof}
\section{Proof of Lemmas \ref{invM_lemma} and \ref{invM_lemma2}} 
\label{InvM lem proof}
\begin{proof}
Suppose that $F_c^T+F_c$ is not negative definite. Since $F_c$ is Hurwitz, there exists a symmetric positive definite matrix $P_c \in \mathbb{R}^{n\times n}$ that satisfies the Laypunov equation $F_c^TP_c+P_cF_c+I=0$. Let  $W=P_c^{-1}>0$ and $T=\sqrt{P_c}$ be the transformation matrix. Multiplying the Lyapunov equation $F_c^TP_c+P_cF_c+I$ by $T^{-1}$ from the left and $T$ from the right yields $\hat{F}_c^T+\hat{F}_c+W=0$ where $\hat{F}_c=TF_cT^{-1}$. Otherwise, let $T=I$ (i.e., $\hat{F}_c=F_c$) and $W=-(F_c^T+F_c)>0$. Thus in either case, $\hat{F}_c^T+\hat{F}_c=-W$ and $\hat{M}_{k}= K_k (I_{m_k} \otimes \hat{F}_c^T) K_k^T$. Let $v$ be an arbitrary unit vector and $\sigma=\lambda_{\min} (W)/2>0$ where $\lambda_{\min} (W)$ denotes the smallest eigenvalue of $W$. Since $(\hat{M}_k+\sigma I)^T(\hat{M}_k+\sigma I) \geq 0$, it follows that
$$
v^T\hat{M}_k^T\hat{M}_kv \geq -\sigma (v^T\bar{M}_kv+\sigma)
$$
where $\bar{M}_k=-K_k (I_{m_k} \otimes W) K_k^T < 0$. Next we shall establish a lower bound on $|v^T\bar{M}_kv|$. Letting $z=K_k^Tv$, we have $\|z\|=1$ and
$$
|v^T\bar{M}_kv|=z^T ( I_{m_k} \otimes W ) z \geq \lambda_{\min} (I_{m_k} \otimes W)=\lambda_{\min} (W) 
$$
where the last equality follows from the fact that the eigenvalues of $I_{m_k} \otimes W$ are the same as those of $W$, each with multiplicity $m_k$. Thus
$$
\|\hat{M}_kv\|^2 = v^T\hat{M}_k^T\hat{M}_kv \geq \sigma (\lambda_{\min} (W)-\sigma) \geq \lambda_{\min}^2 (W)/4
$$
or $\|\hat{M}_kv\|\geq \alpha^{-1}:= \lambda_{\min} (W)/2 >0$ using that $\sigma=\lambda_{\min}(W)/2$. This clearly proves that $\hat{M}_k$ is invertible as a linear map from $\mathbb{R}^{\otimes k}=\{y \in \mathbb{R}^{m_k}: y=x^k, x\in \mathbb{R}^{n}\}$ to $\mathbb{R}^{\otimes k}$. A coordinate transformation (change of basis in $\mathbb{R}^{n}$) preserves its invertibility proving that $M_k$ is also invertible. Finally, $\|v\|=\|\hat{M}_k \hat{M}_k^{-1} v \| \geq \alpha^{-1} \|\hat{M}_k^{-1} v\|$, $\forall v \in \mathbb{R}^{m_{k+1}}$, implies $\|\hat{M}_k^{-1}\| \leq \alpha$.
\end{proof}
\end{appendices}

\bibliography{autosam}           
\bibliographystyle{abbrvnat}

\end{document}